\newcommand{\texcomment}[1]{}
\newcommand{\aset}[1]{\{{#1}\}}
\newtheorem{theorem}{Theorem}[section]
\newtheorem{definition}[theorem]{Definition}
\newtheorem{lemma}[theorem]{Lemma}
\newenvironment{proof}{\noindent\rm{\bf Proof:}}{\hbox{$\Box$}\vspace*{0.2\baselineskip}}
\newcommand{\addappendix}[2]{#2}
\newcommand{\ttif}[3]{\texttt{if}\;{#1}\;\texttt{then}\;{#2}\;\texttt{else}\;{#3}}
\newcommand{\ttassign}[2]{{#1}\;\texttt{:=}\;{#2}}
\newcommand{\ttskip}{\texttt{skip}}
\newcommand{\vect}[1]{\overrightarrow{{#1}}}
\begin{document}

\title{Quantitative Information Flow as Safety and Liveness
  Hyperproperties\thanks{This work was supported by MEXT KAKENHI
    23700026, 22300005, 23220001, and Global COE Program ``CERIES.''}}

\author{Hirotoshi Yasuoka
\institute{Tohoku University\\
Sendai, Japan}
\email{yasuoka@kb.ecei.tohoku.ac.jp}
\and
Tachio Terauchi
\institute{Nagoya University\\
Nagoya, Japan}
\email{\quad terauchi@is.nagoya-u.ac.jp}
}

\def\titlerunning{Quantitative Information Flow as Safety and Liveness
  Hyperproperties}
\def\authorrunning{H. Yasuoka \& T. Terauchi}
\maketitle

\begin{abstract}
  We employ Clarkson and Schneider's ``hyperproperties'' to classify
  various verification problems of quantitative information flow.  The
  results of this paper unify and extend the previous results on the
  hardness of checking and inferring quantitative information flow.
  In particular, we identify a subclass of liveness hyperproperties,
  which we call ``$k$-observable hyperproperties'', that can be
  checked relative to a reachability oracle via self composition.
\end{abstract}

\section{Introduction}
\label{sec:introduction}

We consider programs containing high security inputs and low security
outputs.  Informally, the quantitative information flow problem
concerns the amount of information that an attacker can learn about
the high security input by executing the program and observing the low
security output.  The problem is motivated by applications in
information security.  We refer to the classic by
Denning~\cite{denning82} for an overview.

In essence, quantitative information flow measures {\em how} secure,
or insecure, a program (or a part of a program --e.g., a variable--)
is.  Thus, unlike
non-interference~\cite{DBLP:conf/sosp/Cohen77,goguen:sp1982}, that
only tells whether a program is completely secure or not completely
secure, a definition of quantitative information flow must be able to
distinguish two programs that are both interfering but have different
levels of security.

For example, consider the programs $M_1 \equiv \ttif{H =
  g}{\ttassign{O}{0}}{\ttassign{O}{1}}$ and $M_2 \equiv
\ttassign{O}{H}$.  In both programs, $H$ is a high security input and
$O$ is a low security output.  Viewing $H$ as a password, $M_1$ is a
prototypical login program that checks if the guess $g$ matches the
password.  By executing $M_1$, an attacker only learns whether $H$ is
equal to $g$, whereas she would be able to learn the entire content of
$H$ by executing $M_2$.  Hence, a reasonable definition of
quantitative information flow should assign a higher quantity to $M_2$
than to $M_1$, whereas non-interference would merely say that $M_1$
and $M_2$ are both interfering, assuming that there are more than one
possible value of $H$.

Researchers have attempted to formalize the definition of quantitative
information flow by appealing to information theory.  This has
resulted in definitions based on the Shannon
entropy~\cite{denning82,clarkjcs2007,malacaria:popl2007}, the min
entropy~\cite{smith09}, and the guessing
entropy~\cite{kopf07,DBLP:conf/sp/BackesKR09}.  All of these
definitions map a program (or a part of a program) onto a non-negative
real number, that is, they define a function $\mathcal{X}$ such that
given a program $M$, $\mathcal{X}(M)$ is a non-negative real number.
(Concretely, $\mathcal{X}$ is ${\it SE}[\mu]$ for the
Shannon-entropy-based definition with the distribution $\mu$, ${\it
  ME}[\mu]$ for the min-entropy-based definition with the distribution
$\mu$, and ${\it GE}[\mu]$ for the guessing-entropy-based definition
with the distribution $\mu$.)

In a previous
work~\cite{DBLP:conf/csfw/YasuokaT10,DBLP:conf/esorics/YasuokaT10}, we
have proved a number of hardness results on checking and inferring
quantitative information flow (QIF) according to these definitions.  A
key concept used to connect the hardness results to QIF verification
problems was the notion of $k$-safety, which is an instance in a
collection of the class of program properties called {\em
  hyperproperties}~\cite{DBLP:journals/jcs/ClarksonS10}.  In this paper,
we make the connection explicit by providing a fine-grained
classification of QIF problems, utilizing the full range of
hyperproperties.  This has a number of benefits, summarized below.
\begin{itemize}
\item[1.)] A unified view on the hardness results of QIF problems.
\item[2.)] New insights into hyperproperties themselves.
\item[3.)] A straightforward derivation of some complexity theoretic results.
\end{itemize}
Regarding 1.), we focus on two types of QIF problems, an
upper-bounding problem that checks if QIF of a program is bounded
above by the given number, and a lower-bounding problem that checks if
QIF is bounded below by the given number.  Then, for each QIF
definitions {\it SE}, {\it GE}, {\it ME}, we classify whether or not
they are safety hyperproperty, $k$-safety hyperproperty, liveness
hyperproperty, or $k$-observable hyperproperty (and give a bound on
$k$ for $k$-safe/$k$-observable).  Safety hyperproperty, $k$-safety
hyperproperty, liveness hyperproperty, and observable hyperproperty
are classes of hyperproperties defined by Clarkson and
Schneider~\cite{DBLP:journals/jcs/ClarksonS10}.  In this paper, we
identify new classes of hyperproperties, $k$-observable hyperproperty,
that is useful for classifying QIF problems.  $k$-observable
hyperproperty is a subclass of observable hyperproperties, and
observable hyperproperty is a subclass of liveness
hyperproperties.\footnote{Technically, only non-empty observable
  hyperproperties are liveness hyperproperties.}  We focus on the case
the input distribution is uniform, that is, $\mu = U$, as showing the
hardness for a specific case amounts to showing the hardness for the
general case.  Also, checking and inferring QIF under the
uniformly distributed inputs has received much
attention~\cite{DBLP:conf/ifip1-7/HeusserM09,DBLP:conf/sp/BackesKR09,DBLP:conf/csfw/KopfR10,clark05,malacaria:popl2007,clarkjcs2007},
and so, the hardness for the uniform case is itself of research
interest.\footnote{In fact, computing QIF under other input
  distributions can sometimes be reduced to this
  case~\cite{DBLP:conf/ccs/SamaratiV10}.  See also Section~\ref{sec:maxqif}.}

Regarding 2.), we show that the $k$-observable subset of the
observable hyperproperties is amenable to verification via self
composition~\cite{barthe:csfw04,darvas:spc05,terauchi:sas05,naumann:esorics06,unno:plas2006},
much like $k$-safety hyperproperties, and identify which QIF problems
belong to that family.  We also show that the hardest of the QIF
problems (but nevertheless one of the most popular) can only be
classified as a general liveness hyperproperty, suggesting that
liveness hyperproperty is a quite permissive class of hyperproperties.

Regarding 3.), we show that many complexity theoretic results for QIF
problems of loop-free boolean programs can be derived from their
hyperproperties
classifications~\cite{DBLP:conf/csfw/YasuokaT10,DBLP:conf/esorics/YasuokaT10}.
We also prove new complexity theoretic results, including the
(implicit state) complexity results for loop-ful boolean programs,
complementing the recently proved explicit state complexity
results~\cite{DBLP:conf/csfw/CernyCH11}.

Table~\ref{fig:hysummary} and Table~\ref{fig:comsummary} summarize the
hyperproperties classifications and computational complexities of
upper/lower-bounding problems.  We abbreviate lower-bounding problem,
upper-bounding problem, and boolean programs to LBP, UBP, and BP,
respectively.  The ``constant bound'' rows correspond to bounding
problems with a constant bound (whereas the plain bounding problems
take the bound as an input).
\begin{table}
\begin{center}
  \begin{tabular}{|c|c|c|c|}
    \hline
    & ${\it SE}[U]$ & ${\it ME}[U]$ & ${\it GE}[U]$\\
    \hline
    LBP& Liveness & Liveness & Liveness\\
    \hline
    UBP& Liveness & Safety & Safety\\
    \hline
    LBP constant bound& Liveness & $k$-observable & $k$-observable\\
    \hline
    UBP constant bound& Liveness & $k$-safety~\cite{DBLP:conf/esorics/YasuokaT10} & $k$-safety~\cite{DBLP:conf/esorics/YasuokaT10}\\
    \hline
\end{tabular}
\end{center}
  \caption{A summary of hyperproperty classifications}
\label{fig:hysummary}
\end{table}
\begin{table}
\begin{center}
  \begin{tabular}{|c|c|c|c|}
    \hline
    & ${\it SE}[U]$ & ${\it ME}[U]$ & ${\it GE}[U]$\\
    \hline
    LBP for BP& {\it PSPACE}-hard& {\it PSPACE}-complete& {\it PSPACE}-complete\\
    \hline
    UBP for BP& {\it PSPACE}-hard & {\it PSPACE}-complete& {\it PSPACE}-complete\\
    \hline
    LBP for loop-free BP& {\it PP}-hard & {\it PP}-hard & {\it PP}-hard\\
    \hline
    UBP for loop-free BP& {\it PP}-hard~\cite{DBLP:conf/esorics/YasuokaT10} & {\it PP}-hard~\cite{DBLP:conf/esorics/YasuokaT10} & {\it PP}-hard~\cite{DBLP:conf/esorics/YasuokaT10}\\
    \hline
    LBP for loop-free BP, constant bound& Unknown & {\it NP}-complete & {\it NP}-complete\\
    \hline
    UBP for loop-free BP, constant bound& Unknown & {\it coNP}-complete & {\it coNP}-complete\\
    \hline
\end{tabular}
\end{center}
\caption{A summary of computational complexities}
\label{fig:comsummary}
\end{table}

\addappendix{The proofs omitted from the main body of the paper appear
  in the Appendix.}{The proofs omitted from the paper appear in the
  extended report~\cite{longversion}.}

\section{Preliminaries}

\subsection{Quantitative Information Flow}

We introduce the information theoretic definitions of QIF that have
been proposed in literature.  First, we review the notion of the {\em
  Shannon entropy}~\cite{shannon48}, $\mathcal{H}[\mu](X)$, which is
the average of the information content, and intuitively, denotes the
uncertainty of the random variable $X$.  And, we review the notion of
the {\em conditional entropy}, $\mathcal{H}[\mu](Y|Z)$, which denotes
the uncertainty of $Y$ after knowing $Z$.
\begin{definition}[Shannon Entropy and Conditional Entropy]
  Let $X$ be a random variable with sample space $\mathbb X$ and $\mu$
  be a probability distribution associated with $X$ (we write $\mu$
  explicitly for clarity).  The Shannon entropy of $X$ is defined as
\[
\mathcal{H}[\mu](X)=\sum_{x\in\mathbb{X}} \mu(X=x)\log\frac{1}{\mu(X=x)}
\]

Let $Y$ and $Z$ be random variables with sample space $Y$ and $Z$,
respectively, and $\mu'$ be a probability distribution associated with
$Y$ and $Z$.  Then, the conditional entropy of $Y$ given $Z$ is
defined as
\[
\mathcal{H}[\mu](Y|Z)=\sum_{z\in\mathbb Z} \mu(Z=z) \mathcal{H}[\mu](Y|Z=z)
\]
where
\[
\begin{array}{l}
\mathcal{H}[\mu](Y|Z=z)=\sum_{y\in\mathbb Y} \mu(Y=y|Z=Z)\log\frac{1}{\mu(Y=y|Z=z)} \\
  \mu(Y=y|Z=z)=\frac{\mu(Y=y,Z=z)}{\mu(Z=z)}
\end{array}
\]
(The logarithm is in base 2.)
\end{definition}

Let $M$ be a program that takes a high security input $H$, and gives
the low security output trace $O$.  For simplicity, we restrict to
programs with just one variable of each kind, but it is trivial to
extend the formalism to multiple variables (e.g., by letting the
variables range over tuples or lists).  Also, for the purpose of the
paper, unobservable (i.e., high security) output traces are
irrelevant, and so we assume that the only program output is the low
security output trace.  Let $\mu$ be a probability distribution over
the values of $H$.  Then, the semantics of $M$ can be defined by the
following probability equation.  (We restrict to deterministic
programs in this paper.)
\[
\mu(O = o) = \sum_{\scriptsize \begin{array}{l}h \in \mathbb{H}\\ M(h) = o\end{array}} \mu(H = h)
\]
Here, $M(h)$ denotes the infinite low security output trace of the
program $M$ given a input $h$, and $M(h)=o$ denotes the output trace
of $M$ given $h$ that is equivalent to $o$.  In this paper, we adopt
the termination-insensitive security observation model, and let the
outputs $o$ and $o'$ be equivalent iff $\forall i \in \omega.o_i=\bot
\vee o_i'=\bot \vee o_i=o_i'$ where $o$ and $o_i$ denotes the $i$th
element of $o$, and $\bot$ is the special symbol denoting
termination.\footnote{Here, we adopt the trace based QIF formalization
  of~\cite{malacaria08}.}

In this paper, programs are represented by sets of traces, and traces
are represented by lists of stores of programs.  More formally,
\[
\begin{array}{l}
  M(h)\;\text{is equal to}\;o \quad\text{iff}\quad  \sigma_0;\sigma_1;\ldots;\sigma_i;\ldots \in M\\
  \qquad\text{where}\;\sigma_0(H)=h\;\text{and}\;\forall i\geq 1.\sigma_i(O)=o_i\;(o_i\;\text{denotes the ith element of}\;o)
\end{array}
\]
Here, $\sigma$ denotes a store that maps variables to values.  Because
we restrict all programs to deterministic programs, every program $M$
satisfies the following condition: For any trace
$\vect\sigma,\vect\sigma'\in M$, we have
$\sigma_0(H)=\sigma_0'(H)\Rightarrow \vect\sigma=\vect\sigma'$ where
$\sigma_0$ and $\sigma_0'$ denote the first elements of $\vect\sigma$
and $\vect\sigma'$, respectively.  Now, we are ready to define
Shannon-entropy-based quantitative information
flow.
\begin{definition}[Shannon-Entropy-based
  QIF~\cite{denning82,clarkjcs2007,malacaria:popl2007}]
\label{def:se}
Let $M$ be a program with a high security input $H$, and a low
security output trace $O$.  Let $\mu$ be a distribution over $H$.
Then, the Shannon-entropy-based quantitative information flow is
defined
\[
{\it SE}[\mu](M) =  \mathcal{H}[\mu](H)-\mathcal{H}[\mu](H|O)
\]
\end{definition}
Intuitively, $\mathcal{H}[\mu](H)$ denotes the initial uncertainty and
$\mathcal{H}[\mu](H|O)$ denotes the remaining uncertainty after
knowing the low security output trace.  (For space, the paper focuses
on the low-security-input free definitions of QIF.)  

As an example, consider the programs $M_1$ and $M_2$ from
Section~\ref{sec:introduction}.  For concreteness, assume that $g$ is
the value $01$ and $H$ ranges over the space $\aset{00,01,10,11}$.
Let $U$ be the uniform distribution over $\aset{00,01,10,11}$, that
is, $U(h) = 1/4$ for all $h \in \aset{00,01,10,11}$.  The results are
as follows.
\[
\begin{array}{rcl}
{\it SE}[U](M_1)&=&\mathcal{H}[U](H)-\mathcal{H}[U](H|O)=\log 4-\frac{3}{4}\log{3}\approx .81128\\&&\\

{\it SE}[U](M_2)&=&\mathcal{H}[U](H)-\mathcal{H}[U](H|O)=\log 4-\log 1=2
\end{array}
\]
Consequently, we have that ${\it SE}[U](M_1) \leq {\it SE}[U](M_2)$,
but ${\it SE}[U](M_2) \not\leq {\it SE}[U](M_1)$.  That is, $M_1$ is
more secure than $M_2$ (according to the Shannon-entropy based
definition with uniformly distributed inputs), which agrees with our
intuition.

Next, we introduce the {\em min entropy}, which has recently been
suggested as an alternative measure for quantitative information
flow~\cite{smith09}.
\begin{definition}[Min Entropy]
Let $X$ and $Y$ be random variables, and $\mu$ be an associated probability
distribution.  Then, the min entropy of $X$ is defined
\[
\mathcal{H}_\infty[\mu](X)=\log\frac{1}{\mathcal{V}[\mu](X)}
\]
and the conditional min entropy of $X$ given $Y$ is defined
\[
\mathcal{H}_\infty[\mu](X|Y)=\log\frac{1}{\mathcal{V}[\mu](X|Y)}
\]
where
\[
\begin{array}{rcl}
\mathcal{V}[\mu](X)&=&\max_{x\in\mathbb X} \mu(X=x)\\
\mathcal{V}[\mu](X|Y=y)&=&\max_{x\in\mathbb X} \mu(X=x|Y=y)\\
\mathcal{V}[\mu](X|Y)&=&\sum_{y\in\mathbb Y} \mu(Y=y) \mathcal{V}[\mu](X|Y=y)
\end{array}
\]
\end{definition}

Intuitively, $\mathcal{V}[\mu](X)$ represents the highest probability
that an attacker guesses $X$ in a single try.  We now define the
min-entropy-based definition of QIF.

\begin{definition}[Min-Entropy-based QIF~\cite{smith09}]
\label{def:me}
Let $M$ be a program with a high security input $H$, and a low
security output trace $O$.  Let $\mu$ be a distribution over $H$.
Then, the min-entropy-based quantitative information flow is defined
\[ 
{\it ME}[\mu](M)=\mathcal{H}_\infty[\mu](H)-\mathcal{H}_\infty[\mu](H|O)
\]
\end{definition}

Computing the min-entropy based quantitative information flow for our
running example programs $M_1$ and $M_2$ from
Section~\ref{sec:introduction} with the uniform distribution, we
obtain,
\[
\begin{array}{rcl}
{\it ME}[U](M_1)&=&\mathcal{H}_\infty[U](H)-\mathcal{H}_\infty[U](H|O)=\log 4-\log 2=1\\&&\\
{\it ME}[U](M_2)&=&\mathcal{H}_\infty[U](H)-\mathcal{H}_\infty[U](H|O)=\log 4 -\log 1=2
\end{array}
\]
Again, we have that ${\it ME}[U](M_1) \leq {\it ME}[U](M_2)$ and ${\it
  ME}[U](M_2) \not\leq {\it ME}[U](M_1)$, and so $M_2$ is deemed less
secure than $M_1$.

The third definition of quantitative information flow treated in this
paper is the one based on the guessing entropy~\cite{Massey94}, that
has also recently been proposed in
literature~\cite{kopf07,DBLP:conf/sp/BackesKR09}.
\begin{definition}[Guessing Entropy]
Let $X$ and $Y$ be random variables, and $\mu$ be an associated probability
distribution.  Then, the guessing entropy of $X$ is defined
\[
\mathcal{G}[\mu](X)=\sum_{1\le i\le m}i\times\mu(X=x_i)
\]
where $\aset{x_1,x_2,\dots,x_{m}} = \mathbb{X}$ and
$\forall i,j.i\le j\Rightarrow \mu(X=x_i)\ge\mu(X=x_j)$.

The conditional guessing entropy of $X$ given $Y$ is defined
\[
\mathcal{G}[\mu](X|Y)=\sum_{y\in{\mathbb Y}}\mu(Y=y)\sum_{1\le i\le m}i\times\mu(X=x_i|Y=y)
\]
where $\aset{x_1,x_2,\dots,x_{m}} = \mathbb{X}$ and $\forall i,j.i\le
j\Rightarrow \mu(X=x_i|Y=y)\ge\mu(X=x_j|Y=y)$.
\end{definition}

Intuitively, $\mathcal{G}[\mu](X)$ represents the average number of
times required for the attacker to guess the value of $X$.  We now
define the guessing-entropy-based quantitative information flow.

\begin{definition}[Guessing-Entropy-based
  QIF~\cite{kopf07,DBLP:conf/sp/BackesKR09}]
\label{def:ge}
Let $M$ be a program with a high security input $H$, and a low
security output trace $O$.  Let $\mu$ be a distribution over $H$.
Then, the guessing-entropy-based quantitative information flow is
defined
\[
{\it GE}[\mu](M)=\mathcal{G}[\mu](H)-\mathcal{G}[\mu](H|O)
\]
\end{definition}

We test {\it GE} on the running example from
Section~\ref{sec:introduction} by calculating the quantities for the
programs $M_1$ and $M_2$ with the uniform distribution.  
\[
\begin{array}{rcl}
{\it GE}[U](M_1) &=&\mathcal{G}[U](H)-\mathcal{G}[U](H|O)=\frac{5}{2} - \frac{7}{4} =  0.75\\\\
{\it GE}[U](M_2)& = &\mathcal{G}[U](H)-\mathcal{G}[U](H|O)=\frac{5}{2} - 1 = 1.5
\end{array}
\]
Therefore, we again have that ${\it GE}[U](M_1) \leq {\it GE}[U](M_2)$
and ${\it GE}[U](M_2) \not\leq {\it GE}[U](M_1)$, and so $M_2$ is
considered less secure than $M_1$, even with the guessing-entropy
based definition with the uniform distribution.

\subsection{Bounding Problems}
\label{subsec:bp}
We introduce the bounding problems of quantitative information flow
that we classify.  First, we define the QIF upper-bounding problems.
Upper-bounding problems are defined as follows: Given a program $M$
and a rational number $q$, decide if the information flow of $M$ is
less than or equal to $q$.
\[
\begin{array}{c}
  {\mathcal U}_{\it SE}=\aset{(M,q)\mid {\it SE}[U](M)\le q}\\
  {\mathcal U}_{\it ME}=\aset{(M,q)\mid {\it ME}[U](M)\le q}\\
  {\mathcal U}_{\it GE}=\aset{(M,q)\mid {\it GE}[U](M)\le q}
\end{array}
\]
Recall that $U$ denotes the uniform distribution.

Next, we define lower-bounding problems.  Lower-bounding problems are
defined as follows: Given a program $M$ and a rational number $q$,
decide if the information flow of $M$ is greater than $q$.
\[
\begin{array}{c}
  {\mathcal L}_{\it SE}=\aset{(M,q)\mid {\it SE}[U](M)> q}\\
  {\mathcal L}_{\it ME}=\aset{(M,q)\mid {\it ME}[U](M)> q}\\
  {\mathcal L}_{\it GE}=\aset{(M,q)\mid {\it GE}[U](M)> q}
\end{array}
\]

\subsection{Non Interference}

We recall the notion of non-interference, which, intuitively, says that
the program leaks no information.
\begin{definition}[Non-intereference~\cite{DBLP:conf/sosp/Cohen77,goguen:sp1982}]
  A program $M$ is said to be non-interfering iff for any $h,h'\in
  \mathbb{H}$, $M(h)=M(h')$.
\end{definition}

Non-interference is known to be a special case of bounding problems that
tests against $0$.
\begin{theorem}[\cite{clark05,DBLP:conf/esorics/YasuokaT10}]
\label{thm:nonint}
1.) $M$ is non-interfering iff $(M,0)\in\mathcal{U}_{\it SE}$.
2.) $M$ is non-interfering iff $(M,0)\in\mathcal{U}_{\it ME}$.
3.) $M$ is non-interfering iff $(M,0)\in\mathcal{U}_{\it GE}$.
\end{theorem}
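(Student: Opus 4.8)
The plan is to read off what the membership literally asserts and then reduce it to a clean equality. By the definitions of $\mathcal{U}_{\it SE}$, $\mathcal{U}_{\it ME}$, and $\mathcal{U}_{\it GE}$, the statement $(M,0)\in\mathcal{U}_{\mathcal X}$ is, verbatim, the inequality ${\mathcal X}[U](M)\le 0$, where ${\mathcal X}$ ranges over $\{{\it SE},{\it ME},{\it GE}\}$ and $U$ is the fixed uniform distribution built into the bounding problems. So for each of the three measures I would establish two facts and combine them: (i) a non-negativity bound ${\mathcal X}[U](M)\ge 0$ that holds for every $M$, which turns the inequality ``$\le 0$'' into the equation ``$=0$''; and (ii) the characterization ${\mathcal X}[U](M)=0$ if and only if $M$ is non-interfering. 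Explicitly bridging the gap between ``$\le 0$'' and ``$=0$'' is precisely what makes the membership claim, rather than a bare equality, go through.

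For (i) I would invoke the three ``conditioning never increases uncertainty'' inequalities, one per measure: $\mathcal{H}[U](H\mid O)\le\mathcal{H}[U](H)$ for Shannon entropy, $\mathcal{V}[U](H\mid O)\ge\mathcal{V}[U](H)$ for vulnerability (equivalently $\mathcal{H}_\infty[U](H\mid O)\le\mathcal{H}_\infty[U](H)$), and $\mathcal{G}[U](H\mid O)\le\mathcal{G}[U](H)$ for guessing entropy. Each is immediate from the definitions; for instance $\mathcal{V}[U](H\mid O)=\sum_o\max_h U(H=h,O=o)\ge\max_h\sum_o U(H=h,O=o)=\mathcal{V}[U](H)$. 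Substituting these into Definitions~\ref{def:se}, \ref{def:me}, and \ref{def:ge} gives ${\mathcal X}[U](M)\ge 0$, so $(M,0)\in\mathcal{U}_{\mathcal X}$ holds precisely when ${\mathcal X}[U](M)=0$.

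It remains to prove (ii). The forward direction is easy: if $M$ is non-interfering then $M(h)$ is one fixed trace $o_0$ for every $h$, so by determinism $O$ takes the single value $o_0$ with probability $1$; conditioning on a sure event leaves the distribution of $H$ unchanged, whence $\mathcal{H}[U](H\mid O)=\mathcal{H}[U](H)$, $\mathcal{V}[U](H\mid O)=\mathcal{V}[U](H)$, $\mathcal{G}[U](H\mid O)=\mathcal{G}[U](H)$, and each measure is $0$. The reverse direction I would prove by contraposition, and this is where I expect the main obstacle. Assume $M$ is interfering, so $M(h_1)\ne M(h_2)$ for some $h_1,h_2$. The decisive point is that $U$ has full support: every input carries probability $1/|\mathbb{H}|>0$, so the two distinct output traces are each observed with positive probability and the preimages $M^{-1}(o)$ form a genuine partition of $\mathbb{H}$ into at least two nonempty classes. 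Writing $n_o=|M^{-1}(o)|$, determinism together with uniformity lets me evaluate each conditional quantity in closed form from the $n_o$, after which strictness reduces to a standard convexity estimate. Concretely, $\mathcal{V}[U](H\mid O)=(\#\{o:n_o>0\})/|\mathbb{H}|>1/|\mathbb{H}|=\mathcal{V}[U](H)$ gives ${\it ME}[U](M)>0$; $\mathcal{G}[U](H)-\mathcal{G}[U](H\mid O)=\frac{1}{2|\mathbb{H}|}\big(|\mathbb{H}|^2-\sum_o n_o^2\big)>0$ since $\sum_o n_o^2<\big(\sum_o n_o\big)^2=|\mathbb{H}|^2$ whenever there are at least two classes; and for Shannon entropy strictness of $\mathcal{H}[U](H\mid O)<\mathcal{H}[U](H)$ follows because $H$ and $O$ are not independent (e.g. $U(H=h_2\mid O=M(h_1))=0\ne U(H=h_2)$). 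Each case yields ${\mathcal X}[U](M)>0$, contradicting $(M,0)\in\mathcal{U}_{\mathcal X}$, which finishes all three parts.
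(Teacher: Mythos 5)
Your proposal is correct, but there is nothing in the paper to compare it against: Theorem~\ref{thm:nonint} is stated with a citation to \cite{clark05,DBLP:conf/esorics/YasuokaT10} and is never proved in this paper---its proof lives in the cited prior work. On the merits, your two-step decomposition---(i) $\mathcal{X}[U](M)\ge 0$ via the ``conditioning never increases uncertainty'' inequality for each of $\mathcal{H}$, $\mathcal{V}$, $\mathcal{G}$, and (ii) $\mathcal{X}[U](M)=0$ iff $M$ is non-interfering, with the interfering case handled by contraposition---is the standard route, and your closed-form evaluations under determinism and uniformity check out: writing $m=|\mathbb{H}|$ and $n_o=|M^{-1}(o)|$, one gets $\mathcal{V}[U](H|O)=\#\{o\mid n_o>0\}/m$, so your ME case reproduces the paper's Lemma~\ref{lem:melog}, and ${\it GE}[U](M)=\frac{1}{2m}\bigl(m^2-\sum_o n_o^2\bigr)$, which matches the paper's worked examples ${\it GE}[U](M_1)=0.75$ and ${\it GE}[U](M_2)=1.5$. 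Two hypotheses you use implicitly are worth stating: $\mathbb{H}$ must be finite for the uniform $U$ to exist (true for any fixed program in the paper's setting, even though the class of programs has unbounded domains), and the paper's conditional guessing entropy sums over all $m$ indices of the sample space, which is exactly what yields $\mathcal{G}[U](H|O=o)=(n_o+1)/2$ in your computation. Neither affects correctness.
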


\section{Liveness Hyperproperties}

Clarkson and Schneider have proposed the notion of
hyperproperties~\cite{DBLP:journals/jcs/ClarksonS10}.  
\begin{definition}[Hyperproperties~\cite{DBLP:journals/jcs/ClarksonS10}]
  We say that $P$ is a hyperproperty if $P\subseteq
  \mathcal{P}(\Psi_{\tt inf})$ where $\Psi_{\tt{inf}}$ is the set of
  all infinite traces, and $\mathcal{P}(X)$ denote the powerset of
  $X$.
\end{definition}
Note that hyperproperties are sets of trace sets.  As such, they are
more suitable for classifying information properties than the
classical trace properties which are sets of traces.  For example,
non-interference is not a trace property but a hyperproperty.

Clarkson and Schneider have identified a subclass of hyperproperties,
called liveness hyperproperties, as a generalization of liveness
properties.  Intuitively, a liveness hyperproperty is a property that
can not be refuted by a finite set of finite traces.  That is, if $P$
is a liveness hyperproperty, then for any finite set of finite traces
$T$, there exists a set of traces that contains $T$ and satisfies $P$.
Formally, let ${\it Obs}$ be the set of finite sets of finite traces,
and ${\it Prop}$ be the set of sets of infinite traces (i.e.,
hyperproperties), that is,
\[
\begin{array}{rcl}
  {\it Obs}&=&\mathcal{P}^{\tt fin}(\Psi_{\tt{fin}})\\
  {\it Prop}&=&\mathcal{P}(\Psi_{\tt{inf}})
\end{array}
\]
(Here, $\mathcal{P}^{\tt{fin}}(X)$ denotes the finite subsets of $X$,
$\Psi_{\tt{fin}}$ denotes the set of finite traces.)  Let $\le$
be the relation over ${\it Obs}\times{\it Prop}$ such that
\[
S \le T\quad\textrm{iff}\quad\forall t\in S.\exists t'. t\circ t'\in T
\]
where $t\circ t'$ is the sequential composition of $t$ and $t'$.
Then,
\begin{definition}[Liveness
  Hyperproperties~\cite{DBLP:journals/jcs/ClarksonS10}]
  We say that a hyperproperty $P$ is a liveness hyperproperty if for
  any set of traces $S\in {\it Obs}$, there exists a set of traces
  $S'\in{\it Prop}$ such that $S\leq S'$ and $S'\in P$.
\end{definition}

Now, we state the first main result of the paper: the lower-bounding
problems are liveness hyperproperties.\footnote{We implicitly extend
  the notion of hyperproperties to classify hyperproperties that take
  programs and rational numbers.
  See~\cite{DBLP:conf/esorics/YasuokaT10}.}  We note that, because QIF
is restricted to that of deterministic programs in this paper, the
results on bounding problems are for hyperproperties of deterministic
systems.\footnote{This is done simply by restricting {\it Obs} and
  {\it Prop} to those of deterministic systems.  See
  \cite{DBLP:journals/jcs/ClarksonS10} for detail.}
\begin{theorem}
\label{thm:Llp}
${\mathcal L}_{\it SE}$, ${\mathcal L}_{\it ME}$, and ${\mathcal
  L}_{\it GE}$ are liveness hyperproperties.
\end{theorem}
The proof follows from the fact that, for any program $M$, there
exists a program $M'$ containing all the observations of $M$ and has
an arbitrary large information flow.\footnote{Here, we assume that the
  input domains are not bounded.  Therefore, we can construct a
  program that leaks more high-security inputs by enlarging the input
  domain.  Hyperproperty classifications of bounding problems with
  bounded domains appear in Section~\ref{sec:bounddomain}.}

We show that the upper-bounding problem for Shannon-entropy based
quantitative information flow is also a liveness hyperproperty.
\begin{theorem}
\label{thm:USElp}
${\mathcal U}_{\it SE}$ is a liveness hyperproperty.
\end{theorem}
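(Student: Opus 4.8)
The plan is to exploit the fact that for a deterministic program the Shannon-entropy QIF collapses to the entropy of the output distribution. Writing $SE[U](M) = \mathcal{H}[U](H) - \mathcal{H}[U](H\mid O)$ as the mutual information $I(H;O) = \mathcal{H}[U](O) - \mathcal{H}[U](O\mid H)$, and noting that the output trace $O$ is a function of the input $H$ for a deterministic $M$ (so $\mathcal{H}[U](O\mid H)=0$), we obtain $SE[U](M) = \mathcal{H}[U](O)$. Thus the upper-bounding property $SE[U](M)\le q$ says exactly that the output distribution has entropy at most $q$, and to establish liveness it suffices to show that any finite set of finite observations can be completed to a program whose output distribution is as sharply peaked as we like.

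Concretely, given an observation $S\in\mathit{Obs}$, I would first extend every finite trace $t\in S$ to an infinite trace in the termination-insensitive manner of the semantics, obtaining finitely many input/output behaviors; let $h_1,\dots,h_n$ be the inputs appearing and $o_1,\dots,o_k$ the distinct outputs. I then adjoin $N$ fresh inputs, all mapped to a single common output trace $o^\ast$, forming a program $M'$. Since $U$ is uniform over the enlarged input space of size $n+N$, the dominant output receives probability at least $N/(n+N)$, while each of the boundedly many remaining outputs receives probability at most $n/(n+N)$. A direct estimate of $\mathcal{H}[U](O') = -\sum_o \mu(O'=o)\log\mu(O'=o)$ then shows it tends to $0$ as $N\to\infty$, because $-p\log p$ for the dominant bucket vanishes and there are at most $k+1$ terms, each tending to $0$. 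Hence for $N$ large enough $SE[U](M')=\mathcal{H}[U](O')\le q$, while by construction $S\le M'$ and $M'$ stays deterministic (the fresh inputs are distinct, each mapping to a unique trace). This places $(M',q)$ in $\mathcal{U}_{\it SE}$ and witnesses the liveness condition.

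The point requiring care is the treatment of the rational bound $q$, and with it the reason this argument yields liveness for $SE$ but not for $ME$ or $GE$. Following the program-with-number encoding of \cite{DBLP:conf/esorics/YasuokaT10}, a finite observation constrains $q$ only to a range still admitting a strictly positive value, and the dilution above can push the output entropy below any such positive target; the one genuinely unextendable case is $q=0$ (non-interference), which must be shown to lie beyond the reach of any finite observation under this encoding. By contrast, for the uniform min-entropy one has $ME[U](M')=\log(\#\{\text{distinct outputs}\})$, a quantity that adjoining inputs can only increase, so an observation already exhibiting too many outputs can never be repaired—this is precisely what makes $\mathcal{U}_{\it ME}$ (and likewise $\mathcal{U}_{\it GE}$) safety rather than liveness. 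I expect the main obstacle to be formalizing the number-encoding sharply enough that this $q$-range argument is rigorous, together with the routine but careful verification of the limit $\mathcal{H}[U](O')\to 0$.
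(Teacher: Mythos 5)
Your dilution construction is exactly the paper's own proof of Theorem~\ref{thm:USElp}: the paper argues that the flow can be lowered beneath any bound by adding traces that all share a single output trace, so that any finite observation extends to a program $M'$ with ${\it SE}[U](M')\le q$; your version, with the identity ${\it SE}[U](M)=\mathcal{H}[U](O)$ for deterministic programs and the explicit estimate that the at most $k+1$ terms of the form $-p\log p$ all vanish as $N\to\infty$, is a correct and more detailed rendering of the same argument. The divergence is in how the rational bound is handled. The paper does not encode $q$ into traces: per the footnote citing~\cite{DBLP:conf/esorics/YasuokaT10}, a bounding problem is classified with $q$ treated as a fixed parameter of the instance, so liveness amounts to saying that for every $S\in{\it Obs}$ and every fixed $q$, some extension $S'$ of $S$ satisfies $(S',q)\in\mathcal{U}_{\it SE}$ --- and the dilution argument settles this directly for every positive $q$, with no reasoning about ranges of $q$ needed or present. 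Your instinct that $q=0$ is the one problematic value is correct, and is in fact sharper than the paper's sketch: dilution makes the output entropy arbitrarily small but never zero, and by Theorem~\ref{thm:nonint} the $q=0$ slice is exactly non-interference, which is refuted by two finite traces with conflicting outputs and hence is not a liveness hyperproperty; the paper's argument tacitly concerns positive bounds. However, your proposed repair --- an infinite number-encoding under which no finite observation can ever pin down $q=0$ --- is machinery of your own devising, not the paper's; it would amount to changing the formalization of the bounding problem rather than completing the proof within the paper's framework. In short: same approach on the mathematical core, carried out correctly, and your extra care at $q=0$ flags a genuine edge case that the paper glosses over rather than resolves.
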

The theorem follows from the fact that we can lower the amount of the
information flow by adding traces that have the same output trace.
Therefore, for any program $M$, there exists $M'$ having more
observation than $M$ such that ${\it SE}[U](M') \leq q$.

\subsection{Observable Hyperproperties}
Clarkson and Schneider~\cite{DBLP:journals/jcs/ClarksonS10} have
identified a class of hyperproperties, called {\em observable
  hyperproperties}, to generalize the notion of observable
properties~\cite{DBLP:journals/apal/Abramsky91} to sets of
traces.\footnote{Roughly, an observable property is a set of traces
  having a finite evidence prefix such that any trace having the
  prefix is also in the set.}
\begin{definition}[Observable
  Hyperproperties~\cite{DBLP:journals/jcs/ClarksonS10}]
  We say that $P$ is a observable hyperproperty if for any set of
  traces $S \in P$, there exists a set of traces $T\in{\it Obs}$ such
  that $T\le S$, and for any set of traces $S'\in{\it Prop}$, $T\le
  S'\Rightarrow S'\in P$.
\end{definition}
We call $T$ in the above definition an {\em evidence}.

Intuitively, observable hyperproperty is a property that can be verified
by observing a finite set of finite traces.  We prove a relationship
between observable hyperproperties and liveness hyperproperties.
\begin{theorem}
\label{thm:nonempty}
  Every non-empty observable hyperproperty is a liveness
  hyperproperty.
\end{theorem}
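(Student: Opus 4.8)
The plan is to unfold the two definitions and directly construct, for each $S \in P$, a witness $S'$ required by the liveness condition, using the evidence guaranteed by observability. So suppose $P$ is a non-empty observable hyperproperty, and fix an arbitrary $S_0 \in {\it Obs}$; I must produce $S' \in {\it Prop}$ with $S_0 \le S'$ and $S' \in P$. Since $P$ is non-empty, there exists some $S \in P$, and since $P$ is observable, there exists an evidence $T \in {\it Obs}$ with $T \le S$ such that every $S'' \in {\it Prop}$ satisfying $T \le S''$ lies in $P$. The key idea is that an evidence $T$ constrains only finitely many finite prefixes, so I can freely \emph{enlarge} any trace set to meet the finiteness-of-prefix demands of both $S_0$ and $T$ simultaneously.

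Concretely, the main step is to build a single $S'$ that dominates both $S_0$ and $T$ in the $\le$ ordering. Recall $T \le S'$ means every finite trace $t \in T$ extends to some infinite trace $t \circ t' \in S'$, and likewise for $S_0 \le S'$. The natural construction is to take, for each finite trace in $T \cup S_0$, one infinite extension, and collect these into $S'$; formally, let $S' = \{\, t \circ f(t) \mid t \in T \cup S_0 \,\}$ for some choice of extensions $f(t)$. Because $T \cup S_0$ is still a finite set of finite traces, $S'$ is a well-defined set of infinite traces, i.e.\ $S' \in {\it Prop}$. By construction $T \le S'$ and $S_0 \le S'$, and then the defining property of the evidence $T$ forces $S' \in P$. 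This $S'$ is exactly the witness the liveness definition demands for the arbitrary $S_0$, completing the argument.

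I expect the main obstacle to be purely definitional bookkeeping rather than any deep difficulty: one must check that the constructed $S'$ genuinely satisfies $T \le S'$, which requires the extensions $f(t)$ to be chosen consistently when $T$ and $S_0$ share prefixes or when one trace in $T$ is a prefix of another. A mild subtlety is that $T$ is only guaranteed to be \emph{some} evidence for \emph{some} $S \in P$; I do not get to choose $S$, but non-emptiness of $P$ guarantees at least one such pair $(S,T)$ exists, and that is all the construction consumes. The footnote in the excerpt already flags that emptiness is the only thing that can break the implication — if $P = \emptyset$ there is no $S$ to extract an evidence from, and indeed the empty hyperproperty is vacuously observable but not a liveness hyperproperty — so the non-emptiness hypothesis is used precisely to launch the construction. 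With that hypothesis in hand, the extension argument goes through directly.
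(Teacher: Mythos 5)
Your proof is correct and follows essentially the same route as the paper's: both use non-emptiness to extract an evidence $T$ from some member of $P$, then build a single set of infinite traces dominating both the arbitrary observation and $T$ under $\le$, so that the evidence property forces that set into $P$. The only cosmetic difference is the witness itself --- the paper takes $M \cup M'$ where $M \in P$ and $M'$ is an arbitrary extension of the observation, while you extend each finite trace of $T \cup S_0$ directly; both work for the same reason (and your worry about choosing the extensions ``consistently'' is a non-issue, since $\le$ only requires each finite trace to have \emph{some} infinite extension in the witness).
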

\begin{proof}
  Let $P$ be a non-empty observable hyperproperty.  It must be the
  case that there exists a set of traces $M\in P$.  Then, there exists
  $T\in {\it Obs}$ such that $T\le M$ and $\forall M'\in{\it Prop}.
  T\le M'\Rightarrow M'\in P$.  For any set of traces $S\in{\it Obs}$,
  there exists $M'\in {\it Prop}$ such that $S\le M'$.  Then, we have
  $M\cup M'\in P$, because $T\le M\cup M'$.  Therefore, $P$ is a
  liveness hyperproperty.
\end{proof}\\
We note that the empty set is not a liveness hyperproperty but an
observable hyperproperty.

We show that lower-bounding problems for min-entropy and
guessing-entropy are observable hyperproperties.
\begin{theorem}
\label{thm:LMEsl}
  $\mathcal{L}_{\it ME}$ is an observable hyperproperty.
\end{theorem}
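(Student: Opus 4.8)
The plan is to show that $\mathcal{L}_{\it ME}$ satisfies the definition of observable hyperproperty by exhibiting, for each program $M \in \mathcal{L}_{\it ME}$ (i.e., each $M$ with ${\it ME}[U](M) > q$), a finite evidence set $T \in {\it Obs}$ that witnesses membership. Recall that ${\it ME}[U](M) = \mathcal{H}_\infty[U](H) - \mathcal{H}_\infty[U](H|O)$, and that under the uniform distribution the min-entropy information flow has a clean combinatorial characterization: ${\it ME}[U](M)$ is determined by the number of distinct output traces produced by $M$ (since $\mathcal{V}[U](H|O)$ reduces to counting, the quantity equals $\log$ of the number of distinct observable outputs). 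So the condition ${\it ME}[U](M) > q$ amounts to saying that $M$ produces strictly more than some threshold number of distinct low-security output traces.

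First I would make this counting characterization explicit: since inputs are uniformly distributed and the program is deterministic, $\mathcal{V}[U](H|O) = \frac{|\ran(M)|}{|\mathbb{H}|}$ where $\ran(M)$ is the set of distinct output traces, and hence ${\it ME}[U](M) = \log |\ran(M)|$. Thus ${\it ME}[U](M) > q$ holds iff $|\ran(M)| > 2^q$, i.e., iff $M$ exhibits at least $\lfloor 2^q \rfloor + 1$ distinct output traces. Second, given such an $M$, I would pick witnessing inputs $h_1,\dots,h_n$ (with $n = \lfloor 2^q\rfloor + 1$) whose output traces $M(h_1),\dots,M(h_n)$ are pairwise inequivalent, and take as the evidence $T$ a finite set of finite trace prefixes — one prefix per $h_j$, chosen long enough to already distinguish the $n$ output traces from one another. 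Finiteness is guaranteed because any finite collection of pairwise-inequivalent traces can be separated by finite prefixes (there is a finite position at which each pair already differs).

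Third, I would verify the two clauses of the observable-hyperproperty definition for this $T$: that $T \le M$ (immediate, since each chosen prefix extends to the actual trace $M(h_j) \in M$), and that for every $S' \in {\it Prop}$ with $T \le S'$, we have $S' \in \mathcal{L}_{\it ME}$. The latter is the crux: if $S'$ contains an extension of each of the $n$ distinguishing prefixes, then $S'$ must contain at least $n$ pairwise-inequivalent output traces (because the prefixes were chosen to already force inequivalence), so $|\ran(S')| \ge n > 2^q$, giving ${\it ME}[U](S') > q$ and hence $S' \in \mathcal{L}_{\it ME}$.

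The main obstacle I anticipate is making the prefix-separation argument airtight under the termination-insensitive equivalence on traces used in this paper, where $o \equiv o'$ iff at every index one of them is $\bot$ or they agree. Because $\bot$ (termination padding) can mask differences, two traces are inequivalent only when they genuinely disagree at some position with neither being $\bot$ there; so I must choose each prefix long enough to include such a genuine disagreement witness for every pair, and argue that any trace extending that prefix remains inequivalent to the others regardless of how it continues. A secondary subtlety is the implicit extension of hyperproperties to pairs $(M,q)$ flagged in the paper's footnote — I would handle $q$ as a fixed parameter so that the threshold $n$ and hence the evidence $T$ are well-defined for the given instance.
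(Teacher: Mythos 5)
Your proposal is correct and follows essentially the same route as the paper: the paper's proof also rests on the characterization ${\it ME}[U](M)=\log|\aset{o\mid \exists h.\,M(h)=o}|$ (its Lemma~3.10) together with the observation that any program containing at least as much observation as the evidence has at least as many distinct outputs, hence at least as much min-entropy flow. Your additional care about choosing prefixes that exhibit a genuine (non-$\bot$) disagreement under the termination-insensitive equivalence is a detail the paper leaves implicit, but it does not change the substance of the argument.
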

\begin{theorem}
\label{thm:LGEsl}
  $\mathcal{L}_{\it GE}$ is an observable hyperproperty.
\end{theorem}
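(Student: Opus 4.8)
The plan is to follow the same recipe as for $\mathcal{L}_{\it ME}$ (Theorem~\ref{thm:LMEsl}): given a program $M$ with ${\it GE}[U](M) > q$, I would exhibit a finite evidence $T \in {\it Obs}$ that pins the guessing-entropy leakage above $q$ for every extension. Since $U$ is the uniform distribution, $M$ has only finitely many relevant inputs, say $\mathbb{H} = \{h_1,\dots,h_n\}$, and the output map partitions them into classes $C_1,\dots,C_k$ according to the infinite trace $M(h)$. Distinct classes yield non-equivalent infinite traces, and there are finitely many of them, so I can pick a single length $\ell$ whose length-$\ell$ output prefixes already separate all $k$ classes (each separating disagreement occurs at a finite, non-$\bot$ position). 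The evidence $T$ is then the set of length-$\ell$ prefixes of the traces of $h_1,\dots,h_n$; this is a finite set of finite traces, so $T \in {\it Obs}$, and clearly $T \le M$ since each prefix extends to its full trace in $M$.

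The heart of the argument is the implication $T \le M' \Rightarrow {\it GE}[U](M') > q$. First I would record the closed form, valid under $U$, namely
\[
{\it GE}[U](M) = \frac{1}{n}\sum_{j<j'} |C_j|\,|C_{j'}|,
\]
i.e.\ the number of \emph{distinguishable} input pairs divided by $n$ (one checks this equals $\mathcal{G}[U](H) - \mathcal{G}[U](H|O)$ by a short calculation). Now suppose $T \le M'$. By determinism, each $h_i$ reappears in $M'$ with exactly the same length-$\ell$ output prefix, so two inputs lying in different $M$-classes disagree at a fixed non-$\bot$ position and remain non-equivalent in $M'$; classes therefore cannot merge. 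Hence, restricted to $\{h_1,\dots,h_n\}$, the $M'$-partition refines the $M$-partition, while $M'$ may in addition carry further inputs.

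It then suffices to prove two monotonicity facts about the quantity above. (i) \emph{Refinement}: splitting an output class strictly increases the count of distinguishable pairs while leaving $n$ fixed, so it cannot decrease ${\it GE}[U]$. (ii) \emph{Adding inputs}: adjoining one input $h^*$ (to an existing class of size $c$, or as a fresh singleton with $c=0$) changes the leakage from $D/n$ to $(D+d)/(n+1)$ with $d = n-c$; the inequality $(D+d)/(n+1) \ge D/n$ reduces to $n\,d \ge D$, equivalently $n^2 + \sum_j |C_j|^2 \ge 2nc$, which holds because $\sum_j |C_j|^2 \ge c^2$ and $n^2 + c^2 \ge 2nc$. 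Writing $M''$ for the sub-program of $M'$ carrying exactly $h_1,\dots,h_n$, I then chain ${\it GE}[U](M') \ge {\it GE}[U](M'') \ge {\it GE}[U](M) > q$, where the first step discards the extra inputs (fact (ii)) and the second undoes the refinement (fact (i)). This shows $T$ is a valid evidence, so $\mathcal{L}_{\it GE}$ is observable.

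The step I expect to be the main obstacle is the input-addition monotonicity (ii). For $\mathcal{L}_{\it ME}$ it is essentially free, because min-entropy leakage is $\log$ of the number of distinct outputs and adjoining inputs never destroys an output; for guessing entropy the measure depends on the class sizes and on $n$ through the denominator, so monotonicity is a genuine inequality rather than a structural observation. Care is also needed to handle both effects of an extension at once—new inputs and refined classes—which is exactly why I separate them into facts (i) and (ii) and apply them in the order above.
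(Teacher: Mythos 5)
Your proof is correct and follows essentially the same route as the paper's: exhibit a finite evidence inside $M$ itself and show that ${\it GE}[U]$ is monotone under adding observations (the paper's sketch for Theorem~\ref{thm:LMEsl}, applied ``in a similar manner'' to ${\it GE}$). Your closed form ${\it GE}[U](M)=\frac{1}{n}\sum_{j<j'}|C_j||C_{j'}|$ checks out against the paper's worked examples ($0.75$ and $1.5$), and your two monotonicity facts (refinement and input addition) are exactly the details the paper defers to its appendix.
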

Theorem~\ref{thm:LMEsl} follows from the fact that, if $(M,q)\in
\mathcal{L}_{\it ME}$, then $M$ contains an evidence of
$\mathcal{L}_{\it ME}$.  This follows from the fact that when a
program $M'$ contains at least as much observation as $M$, ${\it
  ME}[U](M) \leq {\it ME}[U](M')$ (cf. Lemma~\ref{lem:melog}).
Theorem~\ref{thm:LGEsl} is proven in a similar manner.

We show that neither of the bounding problems for Shannon-entropy are
observable hyperproperties.
\begin{theorem}
\label{thm:ULSEnsl}
  Neither $\mathcal{U}_{\it SE}$ nor $\mathcal{L}_{\it SE}$ is an
  observable hyperproperty.
\end{theorem}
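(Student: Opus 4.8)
The plan is to disprove the observable hyperproperty condition for both $\mathcal{U}_{\it SE}$ and $\mathcal{L}_{\it SE}$ by exhibiting a witness set of traces that satisfies the property but has no finite evidence. Recall that for $P$ to be observable, every $S \in P$ must have some finite set of finite traces $T$ with $T \le S$ such that $T \le S'$ forces $S' \in P$ for all $S'$. So to refute observability it suffices to produce a single program $M \in P$ (i.e. satisfying the bounding constraint) for which no candidate evidence $T$ works: for every finite $T \le M$, I must construct some $M'$ with $T \le M'$ but $M' \notin P$.

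First I would handle $\mathcal{U}_{\it SE}$. I would pick a program $M$ with ${\it SE}[U](M) \le q$ — for instance a non-interfering program, where by Theorem~\ref{thm:nonint} the flow is $0 \le q$. Now take any finite evidence candidate $T \le M$; since $T$ is a finite set of finite traces, it constrains only finitely many input-output observations. The key idea, mirroring the remark after Theorem~\ref{thm:Llp} that flow can be made arbitrarily large by enlarging the input domain, is that I can extend $T$ to a program $M'$ over a much larger input space where the inputs not pinned down by $T$ are mapped to distinct outputs, driving ${\it SE}[U](M')$ above $q$. Thus $T \le M'$ yet $M' \notin \mathcal{U}_{\it SE}$, contradicting the evidence requirement.

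Next I would handle $\mathcal{L}_{\it SE}$, which is symmetric in spirit but uses the opposite direction. Here I pick $M$ with ${\it SE}[U](M) > q$, and for any finite evidence $T \le M$ I build an $M'$ extending $T$ whose Shannon flow is brought back down to $\le q$ — exploiting exactly the mechanism behind Theorem~\ref{thm:USElp}, namely that adding many traces sharing a common output trace lowers the Shannon entropy leakage. Since $T$ fixes only finitely much behavior, I have freedom to flood the remaining input space with inputs collapsing to a single output, so that $M' \notin \mathcal{L}_{\it SE}$ while still $T \le M'$.

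The main obstacle I anticipate is the quantitative control of ${\it SE}[U]$ as the input domain grows: unlike min-entropy and guessing-entropy (cf. Lemma~\ref{lem:melog}), the Shannon-entropy leakage is not monotone under simply adding observations, so I cannot merely append traces and appeal to monotonicity. I will need to compute or carefully bound $\mathcal{H}[U](H) - \mathcal{H}[U](H|O)$ for the extended program and verify the inequality moves in the desired direction in each of the two cases. This non-monotonicity is precisely what separates Shannon entropy from the other two measures and is the reason neither bounding problem is observable, so the crux of the argument is getting these entropy estimates right rather than any structural subtlety about evidences.
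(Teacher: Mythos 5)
Your proposal is correct and is essentially the paper's own proof: for $\mathcal{U}_{\it SE}$ you defeat any candidate evidence by extending it with traces carrying pairwise disjoint outputs to push ${\it SE}[U]$ above $q$, and for $\mathcal{L}_{\it SE}$ you extend with traces collapsing to a common output to pull the flow below $q$, which is exactly the mechanism the paper invokes (cf.\ the remarks around Theorems~\ref{thm:Llp} and~\ref{thm:USElp}). One point to make explicit when you do the entropy estimates: your witness pair for $\mathcal{L}_{\it SE}$ must have $q>0$, because flooding the domain with a common output drives ${\it SE}[U]$ toward $0$ but never to $0$ once the evidence already exhibits two inequivalent outputs --- at $q=0$ the restricted problem is just interference, which \emph{is} observable --- and since refuting observability only requires a single witness pair, choosing any $q>0$ completes the argument.
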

We give the intuition of the proof for $\mathcal{U}_{\it SE}$.
Suppose ${\it SE}[U](M)\le q$.  $M$ does not provide an evidence of
${\it SE}[U](M)\le q$, because for any potential evidence, we can
raise the amount of the information flow by adding traces that have
disjoint output traces.  The result for $\mathcal{L}_{\it SE}$ is
shown in a similar manner.

It is interesting to note that the bounding problems of {\it SE} can
only be classified as general liveness hyperproperties (cf.
Theorem~\ref{thm:Llp} and \ref{thm:USElp}) even though {\it SE} is
often the preferred definition of QIF in
practice~\cite{denning82,clarkjcs2007,malacaria:popl2007}.  This
suggests that approximation techniques may be necessary for checking
and inferring Shannon-entropy-based QIF.

\subsection{K-Observable Hyperproperties}

We define $k$-observable hyperproperty that refines the notion of
observable hyperproperties.  Informally, a $k$-observable
hyperproperty is a hyperproperty that can be verified by observing $k$
finite traces.
\begin{definition}[K-Observable Hyperproperties]
  We say that a hyperproperty $P$ is a $k$-observable hyperproperty if
  for any set of traces $S \in P$, there exists $T\in {\it Obs}$ such
  that $T\le S$, $|T|\le k$, and for any set of traces $S'\in{\it
    Prop}$, $T\le S'\Rightarrow S'\in P$.
\end{definition}
Clearly, any $k$-observable hyperproperty is an observable
hyperproperty.

We note that $k$-observable hyperproperties can be reduced to
$1$-observable hyperproperties by a simple program transformation
called {\em self composition}~\cite{barthe:csfw04,darvas:spc05}.
\begin{definition}[Parallel Self
  Composition~\cite{DBLP:journals/jcs/ClarksonS10}]
Parallel self composition of $S$ is defined as follows.
\[
S\times S = \aset{(s[0],s'[0]);(s[1],s'[1]);(s[2],s'[2]);\dots\mid s,s'\in S}
\]
where $s[i]$ denotes the $i$th element of $s$.
\end{definition}
Then, a $k$-product parallel self composition (simply self composition
henceforth) is defined as $S^k$.
\begin{theorem}
Every $k$-observable hyperproperty can be reduced to a $1$-observable
hyperproperty via a $k$-product self composition.
\end{theorem}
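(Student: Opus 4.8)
The plan is to make the informal ``self composition'' reduction precise by exhibiting, for a given $k$-observable hyperproperty $P$, a concrete $1$-observable hyperproperty $P'$ on the product traces together with the identity $S\in P \iff S^k\in P'$. The conceptual heart of the reduction is the bijection between $k$-tuples of traces and single product traces: a tuple of equal-length traces $t_1,\dots,t_k$ corresponds to the single product trace $\aseq{t_1,\dots,t_k}$ whose $i$th store is $(t_1[i],\dots,t_k[i])$, and by definition $S^k$ is exactly the set of all such bundles with $t_1,\dots,t_k\in S$. Writing $\mathsf{set}(\tau)$ for the set of (at most $k$) component traces of a product trace $\tau$, the first step is the bundling lemma: for any finite product trace $\tau$, $\aset{\tau}\le S^k$ iff $\mathsf{set}(\tau)\le S$. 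This holds because $\tau\circ\tau'\in S^k$ unfolds componentwise to $t_i\circ\tau'_i\in S$ for each $i$, and conversely equal-length components can each be extended inside $S$ and rebundled. Thus a single product trace below $S^k$ is precisely a set of at most $k$ finite traces below $S$ --- exactly the shape of a $k$-observable evidence.

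Guided by this, I would define
\[
P' = \aset{R\in{\it Prop}\mid \exists\,\tau.\ \aset{\tau}\le R \ \wedge\ \forall S'.\ \mathsf{set}(\tau)\le S'\Rightarrow S'\in P}
\]
where the inner clause ``$\forall S'.\ \mathsf{set}(\tau)\le S'\Rightarrow S'\in P$'' says that the component set of $\tau$ is a \emph{positive evidence} for $P$. The reduction identity $S\in P \iff S^k\in P'$ then follows from the bundling lemma. For the forward direction, $k$-observability of $P$ supplies an evidence $T\in{\it Obs}$ with $|T|\le k$, $T\le S$, and $\forall S'.\ T\le S'\Rightarrow S'\in P$; after padding $T$ to $k$ entries by repetition and aligning their lengths using their extensions in $S$, the resulting bundle $\tau$ satisfies $\aset{\tau}\le S^k$ and has $\mathsf{set}(\tau)$ a positive evidence, so $S^k\in P'$. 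For the converse, any witnessing $\tau$ for $S^k\in P'$ has $\mathsf{set}(\tau)\le S$ by the bundling lemma, whence $S\in P$ directly from the positive-evidence clause.

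Finally I would check that $P'$ is genuinely $1$-observable. Given $R\in P'$, let $\tau$ be the witnessing product trace from the definition; I claim $\aset{\tau}$ is a one-element evidence. Indeed $\aset{\tau}\le R$ and $|\aset{\tau}|=1$ by construction, and for any $R'\in{\it Prop}$ with $\aset{\tau}\le R'$ the very same $\tau$ witnesses $R'\in P'$, since the positive-evidence condition on $\mathsf{set}(\tau)$ does not mention $R'$ at all. Hence the closure requirement of observability holds immediately, with $k$ replaced by $1$.

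The step I expect to be the real obstacle is precisely the design of $P'$ that makes this closure go through. The naive choice --- unbundling a product-trace set $R$ into $\bigcup_i\aset{\pi_i(r)\mid r\in R}$ and asking that this lie in $P$ --- fails to be $1$-observable: for a general, non-self-composed $R$ the $k$ evidence traces of $P$ may be scattered across distinct product traces of $R$, so no single product trace below $R$ bundles them, and one can build such an $R\in P'$ admitting no single-trace evidence. Folding the existential ``$\exists\tau$ whose component set is a positive evidence'' directly into $P'$ sidesteps this, and it is exactly the totality of $k$-tuples present in $S^k$ that guarantees the $k$ scattered evidence traces of $S$ can always be relocated as the components of one product trace. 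The remaining bookkeeping --- padding evidences of size ${<}\,k$ and aligning component lengths before bundling, using that positive evidences are preserved under trace-wise extension (if $T'$ extends $T$ pointwise then $T'\le S'\Rightarrow T\le S'\Rightarrow S'\in P$) --- is routine.
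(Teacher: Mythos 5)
Your proposal is correct and is essentially the paper's own reduction made fully formal: the $k$ evidence traces of a $k$-observable hyperproperty are padded, length-aligned within $S$, and bundled into a single finite product trace below $S^k$, which then serves as the one-trace evidence for the induced hyperproperty $P'$ on the self-composed system (exactly the mechanism the paper illustrates with its two-output example and assertion-based self composition). Your bundling lemma and the explicit definition of $P'$ supply the details the paper leaves implicit, so apart from the degenerate case of the total hyperproperty evaluated on the empty trace set (irrelevant here, since trace sets of programs are nonempty), there is no gap.
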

As an example, consider the following hyperproperty.  The
hyperproperty is the set of programs that return $1$ and $2$ for some
inputs.  Intuitively, the hyperproperty expresses two good things
happen (programs return $1$ and $2$) for programs.
\[
\aset{M\mid \exists h,h'. M(h)=1 \wedge M(h')=2}
\]
This is a $2$-observable hyperproperty as any program containing two
traces, one having $1$ as the output and the other having $2$ as the
output, satisfies it.

We can check the above property by self composition.  (Here, $||$
denotes a parallel composition.)
\[
\begin{array}{rcl}
  M'(H,H')&\;\equiv \;& O:=M(H)\; ||\; O':=M(H') \;||\;\sf{assert}(\neg(O=1\wedge O'=2))
\end{array}
\]
Clearly, $M$ satisfies the property iff the assertion failure is
reachable in the above program, that is, iff the predicate $O=1 \wedge
O'=2$ holds for some inputs $H, H'$.  (Note that, for convenience, we
take an assertion failure to be a ``good thing''.)

We show that neither the lower-bounding problem for min-entropy nor
the lower-bounding problem for guessing-entropy is a $k$-observable
hyperproperty for any $k$.
\begin{theorem}
\label{thm:LMELGEnkl}
  Neither $\mathcal{L}_{\it ME}$ nor $\mathcal{L}_{\it GE}$ is a
  $k$-observable property for any $k$.
\end{theorem}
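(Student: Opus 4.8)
The plan is to refute $k$-observability for every fixed $k$ separately, by producing a single witness instance that admits no evidence of size at most $k$. Recall that negating $k$-observability of a hyperproperty $P$ amounts to exhibiting some $S \in P$ such that for \emph{every} $T \in {\it Obs}$ with $T \le S$ and $|T| \le k$, there is an $S' \in {\it Prop}$ with $T \le S'$ but $S' \notin P$. Since here $P$ ranges over pairs $(M,q)$, the witness will be a pair $(M,q)$ and the falsifying object a pair $(M',q)$ sharing the same bound $q$. The driving intuition is that an evidence $T$ consisting of at most $k$ finite traces exhibits at most $k$ distinct output prefixes, and hence can force at most $k$ distinct output traces in any $S'$ with $T \le S'$: traces of $T$ sharing an output prefix may be extended so as to coincide, while traces with distinct output prefixes necessarily stay distinct.

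For $\mathcal{L}_{\it ME}$ I would first record the identity that, for a deterministic program under the uniform distribution, ${\it ME}[U](M) = \log|\mathcal{O}(M)|$, where $\mathcal{O}(M)$ is the set of distinct output traces (this follows from $\mathcal{V}[U](H|O) = |\mathcal{O}(M)|/|\mathbb{H}|$, in the spirit of Lemma~\ref{lem:melog}). Fix $k$ and set $q = k$. Take $M$ to map an input domain of $2^k+1$ elements injectively onto $2^k+1$ distinct output traces, so ${\it ME}[U](M) = \log(2^k+1) > k = q$ and $(M,q)\in\mathcal{L}_{\it ME}$. Given any $T \le M$ with $|T|\le k$, let $M'$ be the deterministic program obtained by extending each trace of $T$ to an infinite trace (padding with $\bot$) and collapsing the extensions of traces that share an output prefix onto a common output. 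Then $T \le M'$, $M'$ has at most $k$ distinct outputs, and ${\it ME}[U](M') = \log|\mathcal{O}(M')| \le \log k \le k = q$, so $(M',q)\notin\mathcal{L}_{\it ME}$. As $T$ was an arbitrary candidate evidence of size $\le k$, the pair $(M,q)$ admits none; since $k$ was arbitrary, $\mathcal{L}_{\it ME}$ is $k$-observable for no $k$.

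For $\mathcal{L}_{\it GE}$ the skeleton is identical, and the only new work is the guessing-entropy estimate. Again set $q = k$ and take $M \equiv \ttassign{O}{H}$ over an input domain of size $2k+2$, giving ${\it GE}[U](M) = \tfrac{(2k+2)-1}{2} = k+\tfrac12 > k = q$, so $(M,q)\in\mathcal{L}_{\it GE}$. For an arbitrary $T \le M$ with $|T|\le k$, form $M'$ as above; its input domain has size $n' \le k$ (distinct traces of a deterministic program carry distinct inputs). A direct computation gives ${\it GE}[U](M') = \frac{n'}{2} - \frac{1}{2n'}\sum_{o}|\mathbb{H}_o|^2 \le \frac{n'}{2} \le \frac{k}{2} \le q$, where $\mathbb{H}_o = \{h \mid M'(h)=o\}$ and the subtracted sum is nonnegative; hence $(M',q)\notin\mathcal{L}_{\it GE}$, and the same conclusion follows.

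I expect the main obstacle to lie entirely on the guessing-entropy side: unlike min entropy, ${\it GE}[U]$ is not determined by the number of outputs alone, so the bound must be controlled through the \emph{size} of the constructed domain rather than its output count. The key is therefore to verify that the collapsing-and-extension of $T$ yields a legitimate deterministic system $M'$ with $T \le M'$ \emph{and} with $n' \le k$ inputs, so that the elementary inequality ${\it GE}[U](M') \le n'/2$ suffices. Handling the degenerate case $T = \emptyset$ (which forces every $S'$ and so can never be valid evidence) and confirming that sharing of output prefixes is consistent with determinism are the remaining routine checks.
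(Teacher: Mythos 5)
Your proposal is correct and takes essentially the paper's approach: refute $k$-observability by choosing $q$ large relative to $k$ so that (via Lemma~\ref{lem:melog}, and the analogous computation ${\it GE}[U](M)=\frac{n}{2}-\frac{1}{2n}\sum_o|\mathbb{H}_o|^2$ for guessing entropy) a program with many distinct outputs witnesses membership, while any candidate evidence $T$ with $|T|\le k$ embeds into a deterministic program $M'$ with at most $k$ inputs, hence at most $k$ outputs and flow at most $\log k$ (resp.\ at most $k/2$), so $(M',q)$ falsifies $T$. The instantiation details (taking $q=k$ rather than the minimal $q\approx\log k$, and the particular domain sizes $2^k+1$ and $2k+2$) are inessential variations, and the checks you flag as routine (traces of $T$ sharing an input form a prefix chain so the padded $M'$ is deterministic, and empty $T$ can never serve as evidence) are indeed routine.
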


However, if we let $q$ be a constant, then we obtain different
results.  First, we show that the lower-bounding problem for
min-entropy-based quantitative information flow under a constant bound
$q$, is a $\lfloor 2^q\rfloor +1$-observable hyperproperty.
\begin{theorem}
\label{thm:lmel}
Let $q$ be a constant.  Then, $\mathcal{L}_{\it ME}$ is a $\lfloor
2^q\rfloor +1$-observable hyperproperty.
\end{theorem}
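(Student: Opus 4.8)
The plan is to reduce the statement to a purely combinatorial fact about the number of distinct output traces a program produces. First I would record the characterization of min-entropy leakage under the uniform distribution for deterministic programs: writing $d(M)$ for the number of pairwise non-equivalent output traces of $M$ (i.e. the number of equivalence classes of $\aset{M(h)\mid h\in\mathbb{H}}$ under the termination-insensitive equivalence), one has ${\it ME}[U](M)=\log d(M)$. This follows from the definition of ${\it ME}$: under the uniform prior each fibre $M^{-1}(o)$ contributes exactly $\frac{1}{|\mathbb{H}|}$ to $\mathcal{V}[U](H\mid O)$, so $\mathcal{H}_\infty[U](H\mid O)=\log\frac{|\mathbb{H}|}{d(M)}$, and hence ${\it ME}[U](M)=\mathcal{H}_\infty[U](H)-\mathcal{H}_\infty[U](H\mid O)$ equals $\log d(M)$; it is also exactly the content of Lemma~\ref{lem:melog}. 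Consequently $(M,q)\in\mathcal{L}_{\it ME}$ iff $\log d(M)>q$ iff $d(M)>2^q$, and since $d(M)$ is a positive integer this is equivalent to $d(M)\ge\lfloor 2^q\rfloor+1$. Set $k=\lfloor 2^q\rfloor+1$; the goal becomes exhibiting, for every $M$ with $d(M)\ge k$, an evidence $T\in{\it Obs}$ with $|T|\le k$ witnessing membership.

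Next I would construct the evidence. Given $M$ with $d(M)\ge k$, choose infinite traces $\vect\sigma_1,\dots,\vect\sigma_k\in M$ whose output traces $o_1,\dots,o_k$ are pairwise non-equivalent. By the definition of the equivalence relation, for each pair $i\ne j$ there is a finite index $n_{ij}\in\omega$ with $(o_i)_{n_{ij}}\ne\bot$, $(o_j)_{n_{ij}}\ne\bot$, and $(o_i)_{n_{ij}}\ne(o_j)_{n_{ij}}$. Let $N=\max_{i\ne j}n_{ij}$ and let $t_\ell$ be the length-$(N{+}1)$ prefix of $\vect\sigma_\ell$. Take $T=\aset{t_1,\dots,t_k}$. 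Since the outputs of the $t_\ell$ already disagree at some non-$\bot$ position $\le N$, the $t_\ell$ are pairwise distinct, so $|T|=k$, and each $t_\ell$ is a prefix of $\vect\sigma_\ell\in M$, giving $T\le M$.

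Finally I would verify the soundness direction of $k$-observability: for any $S'\in{\it Prop}$ with $T\le S'$, I must show $(S',q)\in\mathcal{L}_{\it ME}$. From $T\le S'$ I obtain, for each $\ell$, a trace $\vect\tau_\ell\in S'$ having $t_\ell$ as a prefix; hence the output of $\vect\tau_\ell$ agrees with $o_\ell$ on every position $\le N$. The difference witnesses $n_{ij}\le N$ therefore survive, so the outputs of $\vect\tau_1,\dots,\vect\tau_k$ are still pairwise non-equivalent, whence $d(S')\ge k$ and $(S',q)\in\mathcal{L}_{\it ME}$. Together with the previous paragraph this shows $T$ is an evidence of size at most $k$, so $\mathcal{L}_{\it ME}$ is a $\lfloor 2^q\rfloor+1$-observable hyperproperty.

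The step I expect to require the most care is the interaction between truncation and the termination-insensitive equivalence (the $\bot$-handling): I must ensure both that non-equivalent infinite outputs always differ at a genuinely observable (non-$\bot$) finite position---so that a finite prefix set $T$ can witness the distinctness---and that such a witness cannot be destroyed by extending $t_\ell$ to an arbitrary infinite trace in $S'$. Both facts follow from the definition of output equivalence, which compares only positions where neither trace has terminated, but this is precisely where a naive prefix argument could silently fail, so it merits explicit justification.
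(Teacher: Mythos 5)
Your proposal is correct and takes essentially the same approach as the paper: it reduces the problem via Lemma~\ref{lem:melog} to counting pairwise non-equivalent output traces, notes that $(M,q)\in\mathcal{L}_{\it ME}$ iff $M$ has at least $\lfloor 2^q\rfloor+1$ distinct outputs, and uses $\lfloor 2^q\rfloor+1$ traces realizing those outputs as the evidence set. The extra care you take in truncating to finite prefixes that preserve non-equivalence under the termination-insensitive ($\bot$-handling) model is exactly the detail the paper's proof sketch leaves implicit, not a departure from it.
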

The theorem follows from Lemma~\ref{lem:melog} below which states that
min-entropy based quantitative information flow under the uniform
distribution coincides with the logarithm of the number of output
traces.  That is, $(M,q)\in \mathcal{L}_{\it ME}$ iff there is an
evidence in $M$ containing $\lfloor 2^q\rfloor +1$ disjoint outputs.
\begin{lemma}[\cite{smith09}]
\label{lem:melog}
${\it ME}[U](M)=\log |\aset{o\mid \exists h. M(h)=o}|$
\end{lemma}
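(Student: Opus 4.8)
The plan is to unfold the definitions of min-entropy and conditional min-entropy and reduce the claim to a direct computation that exploits determinism and the uniformity of $U$. Writing $n = |\mathbb{H}|$, I would first observe that
\[
{\it ME}[U](M) = \mathcal{H}_\infty[U](H) - \mathcal{H}_\infty[U](H|O) = \log\frac{\mathcal{V}[U](H|O)}{\mathcal{V}[U](H)},
\]
so it suffices to evaluate the two vulnerabilities. Since $U(H=h) = 1/n$ for every $h$, we immediately obtain $\mathcal{V}[U](H) = 1/n$.

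The second step handles the conditional vulnerability. Because $M$ is deterministic, each input $h$ produces a unique output trace $M(h)$, so the inputs partition into fibers $\aset{h \mid M(h) = o}$ indexed by the reachable outputs $o \in \aset{o \mid \exists h.\, M(h) = o}$. Using the semantics equation $U(O=o) = \sum_{M(h)=o} U(H=h)$, each fiber has probability $U(O=o) = |\aset{h \mid M(h)=o}|/n$, and the definition of conditional probability gives $U(H=h \mid O=o) = 1/|\aset{h' \mid M(h')=o}|$ for every $h$ in the fiber (and $0$ otherwise). Hence $\mathcal{V}[U](H \mid O=o) = \max_{h} U(H=h\mid O=o) = 1/|\aset{h'\mid M(h')=o}|$, the maximum being attained uniformly over the fiber.

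The crux of the argument is the cancellation $U(O=o)\cdot\mathcal{V}[U](H\mid O=o) = 1/n$, which holds for every reachable output $o$. Substituting into $\mathcal{V}[U](H\mid O) = \sum_{o} U(O=o)\,\mathcal{V}[U](H\mid O=o)$ collapses the sum to $|\aset{o \mid \exists h.\,M(h)=o}|/n$, since the summand is the constant $1/n$ and there is one term per reachable output. Dividing by $\mathcal{V}[U](H) = 1/n$ and taking the logarithm then yields ${\it ME}[U](M) = \log|\aset{o\mid \exists h.\,M(h)=o}|$, as claimed. I expect no serious obstacle here, as the result is essentially a calculation; the only point requiring care is the trace-based output model, where $o$ ranges over equivalence classes of infinite output traces rather than over single values. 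Since determinism makes the fibers well defined and the termination-insensitive equivalence only fixes which traces are identified, the partition argument and the cancellation go through unchanged.
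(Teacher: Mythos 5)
Your proof is correct, and it is essentially the canonical argument: the paper itself does not prove this lemma but quotes it from Smith~\cite{smith09}, and your computation---rewriting ${\it ME}[U](M)$ as $\log\bigl(\mathcal{V}[U](H|O)/\mathcal{V}[U](H)\bigr)$, then using determinism and uniformity to obtain the per-output cancellation $U(O=o)\cdot\mathcal{V}[U](H\mid O=o)=1/|\mathbb{H}|$ so that the conditional vulnerability collapses to $|\aset{o\mid \exists h.\,M(h)=o}|/|\mathbb{H}|$---is exactly the standard derivation of that result. No gaps; your closing remark about outputs being equivalence classes of traces is the right caveat, and it does not affect the counting argument.
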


Next, we show that the lower-bounding problem for
guessing-entropy-based quantitative information flow under a constant
bound $q$ is a $\lfloor \frac{(\lfloor q\rfloor +1)^2}{\lfloor
  q\rfloor +1 -q}\rfloor +1$-observable hyperproperty.
\begin{theorem}
\label{thm:lgel}
  Let $q$ be a constant.  Then, $\mathcal{L}_{\it GE}$ is a $\lfloor
  \frac{(\lfloor q\rfloor +1)^2}{\lfloor q\rfloor +1 -q}\rfloor
  +1$-observable hyperproperty.
\end{theorem}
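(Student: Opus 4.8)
The plan is to mirror the proof of Theorem~\ref{thm:lmel}: first reduce ${\it GE}[U]$ to a closed form that is monotone under adding observations (the ${\it GE}$ analogue of Lemma~\ref{lem:melog}), and then exhibit, for every $M$ with ${\it GE}[U](M)>q$, an evidence of the advertised size. For the closed form, note that under the uniform distribution the conditional law of $H$ given each observed output is again uniform on that output's preimage. Writing $N$ for the number of inputs (traces) of $M$ and $n_1,\dots,n_t$ for the sizes of the preimages of the distinct output traces, a direct calculation from Definition~\ref{def:ge} gives $\mathcal{G}[U](H)=\frac{N+1}{2}$ and $\mathcal{G}[U](H\mid O)=\frac{1}{2N}\sum_j n_j(n_j+1)$, whence
\[
{\it GE}[U](M)\;=\;\frac{1}{2N}\Bigl(N^2-\sum_j n_j^2\Bigr)\;=\;\frac{1}{N}\sum_{i<j}n_i n_j .
\]
The second form shows that ${\it GE}[U](M)$ equals the number of unordered input pairs that $M$ separates (maps to distinct outputs), divided by $N$; I will call $\frac{1}{N}\sum_{i<j}n_in_j$ the separated-pair density.

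From this closed form I would prove the monotonicity already underlying Theorem~\ref{thm:LGEsl}: adding one trace never decreases ${\it GE}[U]$. Adding an input to a class of size $n_j$ changes the value by $\frac{(N-n_j)^2+\sum_{i\ne j}n_i^2}{2N(N+1)}\ge 0$, and adding a fresh singleton output changes it by $\frac{N^2+\sum_j n_j^2}{2N(N+1)}>0$. Consequently, if $T$ is any subset of the traces of $M$, viewed as a sub-program, then ${\it GE}[U](M')\ge{\it GE}[U](T)$ for every $M'$ with $T\le M'$. Hence $T$ is a valid evidence for $\mathcal{L}_{\it GE}$ exactly when ${\it GE}[U](T)>q$ (the forward direction is monotonicity; the converse is witnessed by $M'=T$ itself), and the theorem reduces to the purely combinatorial claim: \emph{whenever ${\it GE}[U](M)>q$ there is a subset $T$ of the traces of $M$ with separated-pair density exceeding $q$ and $|T|\le \lfloor\frac{(\lfloor q\rfloor+1)^2}{\lfloor q\rfloor+1-q}\rfloor+1$.}

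To build such a $T$ I would choose $m_j\le n_j$ inputs from each class so as to keep $\frac{\sum_{i<j}m_im_j}{\sum_j m_j}$ above $q$ while minimizing $\sum_j m_j$. Taking $k=\lfloor q\rfloor+1$, the guiding computation is that $k$ classes of a common size $c$ have density $\frac{c(k-1)}{2}$, which exceeds $q$ as soon as $c>\frac{2q}{k-1}$; raising $c$ to $\frac{k}{k-q}$ makes the total size $kc=\frac{k^2}{k-q}$, and the inequality $k(k-1)>2q(k-q)$ (equivalently $k^2-(2q+1)k+2q^2>0$, which holds for the relevant range of $q$) confirms that this balanced configuration already has density $>q$. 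The extra ``$+1$'' in the bound absorbs the rounding needed to make the strict inequality hold with integer class sizes (e.g. for $q=1$ one uses classes of sizes $3$ and $2$ rather than $2$ and $2$).

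The main obstacle is that $M$ need not contain $k$ output classes, nor classes large enough to be cut down to size $\frac{k}{k-q}$: a two-class program with two large preimages already has arbitrarily large ${\it GE}[U]$. So the construction must adapt to the number $t$ of distinct outputs present — when $t$ is large one instead selects $\lfloor 2q\rfloor+2$ singleton outputs (density $\frac{\lfloor 2q\rfloor+1}{2}>q$), and when $t$ is small one balances over the few available classes with a correspondingly larger common size. The crux is to verify, uniformly over $t$, that the minimal such evidence never exceeds $\lfloor\frac{k^2}{k-q}\rfloor+1$; I would do this by passing to a size-minimal valid $T$ and exploiting its local optimality (removing a vertex from its largest class must drop the density to $\le q$, which forces every class of $T$ to have size strictly below $|T|-q$), then combining this with $\sum_{i<j}m_im_j>q|T|$ to pin down $|T|$. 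Guessing entropy's dependence on the full output partition, rather than merely the number of outputs as for min entropy, is precisely why the bound here is $\frac{k^2}{k-q}$ instead of the cleaner $\lfloor 2^q\rfloor+1$ of Theorem~\ref{thm:lmel}.
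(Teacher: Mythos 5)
Your closed form ${\it GE}[U](M)=\frac{1}{N}\sum_{i<j}n_in_j$, both monotonicity computations, and the reduction of the theorem to a combinatorial claim about sub-partitions (an evidence is valid iff its separated-pair density exceeds $q$) are all correct, and this skeleton --- a closed-form lemma plus an explicit computation of the evidence size from $q$ --- is indeed the paper's stated strategy, mirroring Theorem~\ref{thm:lmel}. The gap is exactly at the step you call the crux, and it is a real one. From size-minimality of $T$ you record two consequences: every class satisfies $m_j<|T|-q$, and $\sum_{i<j}m_im_j>q|T|$; you then assert that combining these ``pins down'' $|T|$. It does not: these two conditions hold at arbitrarily large sizes. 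For instance, with $q=10.9$ (so $k=11$ and the claimed bound is $1211$), a partition into $100$ classes of size $100$ has every class size far below $|T|-q$ and density $4950>q$, with $|T|=10000$; taking $t$ classes of size $t$ gives such configurations of every size $t^2$. These are of course not size-minimal, but that is the point: the consequences of minimality you chose to keep are strictly too weak to bound $|T|$ at all, so the final inference of the proof is invalid as stated. To make it work you must retain the full local-optimality inequality: writing $m=|T|$, $m_1$ for the largest class, $c=m-m_1$, and $P''$ for the separated pairs inside the complement of the largest class, removing one element of the largest class yields $m_1(c-q)\le(q+1)c-q-P''$, hence $m\le\frac{c^2+c-q-P''}{c-q}$. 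This, together with $c>q$ (hence $c\ge k$ by integrality), the monotonicity of $h(c)=\frac{c^2+c-q}{c-q}$ (decreasing on $(q,2q)$, increasing after), a separate argument bounding the feasible values of $c$ when $P''$ is small, and the closing integrality step $\lfloor h(k)\rfloor=\lfloor\frac{k^2}{k-q}\rfloor+1$, is what actually produces the bound; none of that work appears in the sketch.

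Relatedly, your guiding picture of where $\frac{k^2}{k-q}$ comes from misidentifies the extremal configuration, which is why the sketched case split (``when $t$ is small one balances over the few available classes'') would fail. The binding case is not a balanced family of $k$ classes of size $\frac{k}{k-q}$ --- as you partly noticed, at $q=1$ that configuration only reaches density $q$, and for $q<1$ (where $k=1$) it is vacuous --- but the maximally unbalanced two-class configuration. Concretely, take a program whose output partition is $(n_1,k)$ with $n_1$ huge: any valid evidence with classes $(m_1,m_2)$, $m_2\le k$, has density $\frac{m_1m_2}{m_1+m_2}<m_2$, so $m_2>q$ forces $m_2=k$, and then density $>q$ forces $m_1\ge\lfloor\frac{qk}{k-q}\rfloor+1$, for a total of exactly $\lfloor\frac{k^2}{k-q}\rfloor+1$; no balanced sub-evidence works here at all. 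That the two totals $k\cdot\frac{k}{k-q}$ and $k+\frac{qk}{k-q}$ coincide at $\frac{k^2}{k-q}$ is what makes the balanced heuristic look right, but a correct proof has to be organized around the unbalanced case --- which is also the example showing the stated constant is tight.
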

The proof of the theorem is similar to that of Theorem~\ref{thm:lmel},
in that the size of the evidence set can be computed from the bound
$q$.

\subsection{Computational Complexities}
\label{sec:complower}

We prove computational complexities of $\mathcal{L}_{\it ME}$ and
$\mathcal{L}_{\it GE}$ by utilizing their hyperproperty
classifications.  Following previous
work~\cite{DBLP:conf/csfw/YasuokaT10,DBLP:conf/esorics/YasuokaT10,DBLP:conf/csfw/CernyCH11},
we focus on boolean programs.

First, we introduce the syntax of boolean programs.  \addappendix{The
  semantics of boolean programs is standard and is deferred to
  Appendix (Figure~\ref{fig:semantics}).}{The semantics of boolean
  programs is standard.}  We call boolean programs without {\sf while}
statements {\em loop-free} boolean programs.
\begin{figure}[h]
\[
\begin{array}[t]{rcl}
  M & ::= &x:=\psi\mid M_0 ; M_1 
  \mid {\sf if}\; \psi\;{\sf then}\; M_0 \;{\sf else}\; M_1
  \mid {\sf while}\;\psi\;{\sf do}\;M\mid\ttskip \\
  \phi,\psi&::=&{\sf true}\mid x\mid \phi\wedge \psi\mid \neg \phi
\end{array}
\]
\caption{The syntax of boolean programs}
\label{fig:syntax}
\end{figure}

In this paper, we are interested in the computational complexity with
respect to the syntactic size of the input program (i.e., ``implicit
state complexity'', as opposed to \cite{DBLP:conf/csfw/CernyCH11}
which studies complexity over programs represented as explicit
states).

We show that the lower-bounding problems for min-entropy and
guessing-entropy are ${\it PP}$-hard.
\begin{theorem}
\label{thm:lmelgepp}
$\mathcal{L}_{\it ME}$ and $\mathcal{L}_{\it GE}$ for loop-free
boolean programs are {\it PP}-hard.
\end{theorem}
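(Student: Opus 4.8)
The plan is to prove \textit{PP}-hardness by reduction from a canonical \textit{PP}-complete counting problem, namely \textsc{MajSAT} (given a boolean formula, decide whether more than half of its assignments are satisfying), or equivalently a counting problem that asks whether the number of satisfying assignments of a formula $\phi$ exceeds a given threshold. The key leverage is Lemma~\ref{lem:melog}, which tells us that ${\it ME}[U](M) = \log|\{o \mid \exists h.\,M(h)=o\}|$, so that the min-entropy of a boolean program under the uniform distribution is exactly the logarithm of the number of \emph{distinct} reachable output traces. Thus the lower-bounding problem $\mathcal{L}_{\it ME}$ for a boolean program asks whether the number of distinct outputs exceeds $2^q$, which is a counting condition of precisely the kind that \textit{PP} captures.

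First I would take an instance $\phi(x_1,\dots,x_n)$ of the counting problem and build a loop-free boolean program $M$ whose high input $H$ encodes an assignment to the variables $x_1,\dots,x_n$, and whose low output is arranged so that the number of distinct output traces corresponds to (a simple function of) the number of satisfying assignments of $\phi$. A natural encoding is to have $M$ output a distinguished constant value (say a single fixed output) on all \emph{non}-satisfying assignments, while on each satisfying assignment it outputs something that distinguishes those inputs — for instance echoing $H$ itself — so that the count of distinct outputs is $(\#\text{sat}) + 1$ when there is at least one unsatisfying assignment. Then by Lemma~\ref{lem:melog}, ${\it ME}[U](M) = \log((\#\text{sat})+1)$, and the query $(M,q) \in \mathcal{L}_{\it ME}$, i.e. ${\it ME}[U](M) > q$, becomes $\#\text{sat} + 1 > 2^q$, so choosing $q$ appropriately turns the threshold-counting question directly into a lower-bounding query. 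The construction of $M$ from $\phi$ is clearly polynomial in the size of $\phi$ and uses only assignments and conditionals, hence stays within loop-free boolean programs.

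For the guessing-entropy case, I would reuse the same program $M$ and appeal to the fact (analogous to Lemma~\ref{lem:melog}, and used implicitly in Theorems~\ref{thm:LGEsl} and~\ref{thm:lgel}) that ${\it GE}[U](M)$ under the uniform distribution is likewise determined by the partition of inputs induced by the output equivalence classes: it is a monotone, explicitly computable function of the block sizes of that partition. Since my encoding makes those block sizes transparent — one block of unsatisfying assignments and singleton-style blocks for satisfying ones — the value ${\it GE}[U](M)$ is again a closed-form expression in $\#\text{sat}$, so the same reduction, with a recomputed bound $q$, reduces the threshold-counting problem to $\mathcal{L}_{\it GE}$. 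I would present the \textit{ME} reduction in full and then note that the \textit{GE} reduction is obtained by substituting the guessing-entropy formula for the block partition and solving for the corresponding threshold.

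The main obstacle I expect is not the logical skeleton of the reduction but the arithmetic bookkeeping: I must ensure that the chosen bound $q$ is a rational number computable in polynomial time from $\phi$ (note $q$ is an \emph{input} here, not a constant, which is essential — the constant-bound variants are only \textit{NP}/\textit{coNP}, per Table~\ref{fig:comsummary}), and that the counting condition translates to the strict inequality ${\it ME}[U](M) > q$ without off-by-one slippage between $\#\text{sat}$, $\#\text{sat}+1$, and the floor operations. For guessing entropy the bookkeeping is heavier, because ${\it GE}[U](M) = \mathcal{G}[U](H) - \mathcal{G}[U](H|O)$ involves ranking the blocks by size and summing $i \times \mu$, so I would verify carefully that the resulting expression is monotone in $\#\text{sat}$ and that inverting it to recover the threshold $q$ yields a rational of polynomial bit-length. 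Establishing these quantitative identities cleanly is the crux; once they are in hand, \textit{PP}-hardness follows immediately from the \textit{PP}-hardness of the underlying threshold-counting problem.
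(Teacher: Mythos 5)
Your reduction strategy is the same as the paper's: the theorem is proved by a polynomial-time reduction from $\text{MAJSAT}$, using Lemma~\ref{lem:melog} to convert the min-entropy bound into a threshold on the number of distinct output values, with an analogous block-size computation for ${\mathcal L}_{\it GE}$. Your ${\it GE}$ half is in fact unproblematic: with your encoding (echo $H$ on satisfying assignments, emit a tagged constant otherwise), writing $s$ for the number of satisfying assignments and $n$ for the number of variables, one gets ${\it GE}[U](M)=2^{n-1}-\bigl((2^n-s)^2+s\bigr)/2^{n+1}$, which is strictly increasing in $s$ for $s\le 2^n-1$, so the rational bound $q=2^{n-1}-2^{n-3}-1/4$ (the value at $s=2^{n-1}$) gives $(M,q)\in{\mathcal L}_{\it GE}$ iff $s>2^{n-1}$; every quantity involved is a polynomial-size rational computable from $\phi$ in polynomial time.

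The genuine gap is in the ${\it ME}$ half, exactly at the point you flagged as ``arithmetic bookkeeping'' and left unresolved. With your encoding the number of distinct outputs is $s+1$, so you need a rational $q$ with $2^{n-1}+1\le 2^q<2^{n-1}+2$; since $\log(2^{n-1}+1)$ is irrational and the interval $[\log(2^{n-1}+1),\log(2^{n-1}+2))$ has width only about $2^{-n}$, there is no simple choice of $q$, and the natural choice $q=n-1$ decides $s\ge 2^{n-1}$ rather than $s>2^{n-1}$ --- the off-by-one you worried about is real, not hypothetical. Two ways to close it: (i) swap the branches, i.e.\ echo $H$ on \emph{unsatisfying} assignments and emit the tagged constant on satisfying ones; then the number of distinct outputs is $(2^n-s)+1$ (or $2^n$ when $s=0$), and with the integer bound $q=n-1$ one gets $(M,n-1)\in{\mathcal L}_{\it ME}$ iff $s\le 2^{n-1}$, i.e.\ iff $\phi\notin\text{MAJSAT}$; since ${\it PP}$ is closed under complementation, this reduction from the complement of $\text{MAJSAT}$ suffices for ${\it PP}$-hardness and involves no numerical subtleties. (ii) Keep your encoding but prove that a dyadic rational in the interval above can be computed in polynomial time; this is true but requires a polynomial-precision approximation argument for the logarithm, considerably heavier than (i). Absent one of these (or an appeal to ${\it PP}$-hardness of the non-strict majority variant of $\text{MAJSAT}$), the ${\it ME}$ reduction as you stated it does not decide the problem as the paper defines it.
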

The theorem is proven by a reduction from ${\text{MAJSAT}}$, which is
a ${\it PP}$-hard problem.  ${\it PP}$ is the set of decision problems
solvable by a polynomial-time nondeterministic Turing machine which
accepts the input iff more than half of the computation paths accept.
{\text{MAJSAT}} is the problem of deciding, given a boolean formula
$\phi$ over variables $\overrightarrow x$, if there are more than
$2^{|\overrightarrow x|-1}$ satisfying assignments to $\phi$.

Next, we show that if $q$ be a constant, the upper-bounding problems
for min-entropy and guessing-entropy become {\it NP}-complete.
\begin{theorem}
\label{thm:lmelgenp}
  Let $q$ be a constant.  Then, $\mathcal{L}_{\it ME}$ and
  $\mathcal{L}_{\it GE}$ are {\it NP}-complete for loop-free boolean
  programs.
\end{theorem}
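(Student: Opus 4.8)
The plan is to prove both directions for each of $\mathcal{L}_{\it ME}$ and $\mathcal{L}_{\it GE}$ on loop-free boolean programs with $q$ held fixed: membership in {\it NP} and {\it NP}-hardness. For membership I would exploit the $k$-observable classifications already in hand (Theorem~\ref{thm:lmel} and Theorem~\ref{thm:lgel}). Since $q$ is constant, the evidence bounds $\lfloor 2^q\rfloor+1$ and $\lfloor\frac{(\lfloor q\rfloor+1)^2}{\lfloor q\rfloor+1-q}\rfloor+1$ are constants $k$ independent of the input program. The nondeterministic procedure guesses at most $k$ inputs $h_1,\dots,h_k$, evaluates $M$ on each — polynomial for loop-free boolean programs, since every execution has length bounded by the program size — and forms the sub-program $S$ of the resulting $\le k$ traces, accepting iff the leakage of $S$ exceeds $q$. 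Completeness follows from $k$-observability: if $(M,q)\in\mathcal{L}$ there is an evidence $T\le M$ with $|T|\le k$, and taking $S$ to be the set of $M$-traces extending $T$ gives a witness whose own leakage is $>q$. Soundness follows from the monotonicity of leakage under added observations (the property used for Theorems~\ref{thm:LMEsl} and~\ref{thm:LGEsl}): since $M$ contains at least as much observation as $S$, the leakage of $M$ is at least that of $S$, hence $>q$. For {\it ME} the leakage test is simply whether the $k$ outputs contain $\lfloor 2^q\rfloor+1$ distinct values (Lemma~\ref{lem:melog}); for {\it GE} it is a direct evaluation of ${\it GE}[U](S)$ over the $\le k$ guessed inputs.

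For hardness I would reduce from SAT. Given $\phi(\vec x)$, construct the loop-free boolean program $M_\phi$ that outputs its own input $\vec x$ when $\phi(\vec x)$ holds and a fixed constant $\bot$ otherwise. Writing $s$ for the number of satisfying assignments and $N=2^{|\vec x|}$, the output partition consists of $s$ singleton classes together with one class of size $N-s$ for $\bot$, so ${\it ME}[U](M_\phi)=\log(s+1)$ and a short computation gives ${\it GE}[U](M_\phi)=s-\frac{s(s+1)}{2N}$. Both quantities are strictly increasing in $s$, so each crosses the constant bound $q$ at a constant threshold $s_0$: for {\it ME} one needs $s+1\ge\lfloor 2^q\rfloor+1$ distinct outputs, and for {\it GE} one needs $s_0=\lfloor q\rfloor+1$ once $N$ is large enough that the correction $\frac{s(s+1)}{2N}$ is smaller than the gap, which holds automatically because $N$ is exponential in the formula size. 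Thus $(M_\phi,q)\in\mathcal{L}$ iff $\phi$ has at least $s_0$ satisfying assignments. To reduce plain SAT I prepend $\lceil\log_2 s_0\rceil$ free dummy variables to $\phi$, amplifying any single satisfying assignment into at least $s_0$ satisfying assignments; then $\phi$ is satisfiable iff the threshold is met. The same construction, with the number of dummy variables chosen according to the measure and $q$, handles both {\it ME} and {\it GE}, giving {\it NP}-hardness; combined with the membership argument this yields {\it NP}-completeness.

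The step I expect to be the main obstacle is the {\it GE} case. Unlike {\it ME}, whose leakage is fixed by the number of distinct outputs alone (Lemma~\ref{lem:melog}), guessing-entropy leakage depends on the sizes of the output-equivalence classes, so I must carry out the exact computation of ${\it GE}[U](M_\phi)$ as a function of $s$, pin down the constant threshold $s_0$ from $q$, and check the boundary behavior — in particular that the vanishing $\frac{s(s+1)}{2N}$ term does not flip membership when $q$ is an integer. The membership direction for {\it GE} is likewise more delicate than for {\it ME}, since extracting a bounded standalone witness relies on the guessing-entropy monotonicity underlying Theorem~\ref{thm:LGEsl}; I would make that monotonicity statement precise before invoking it.
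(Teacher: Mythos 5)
Your proposal is correct, and its skeleton matches the paper's: membership in \emph{NP} comes from the constant-$k$ observability classifications (Theorems~\ref{thm:lmel} and~\ref{thm:lgel}), and hardness comes from a reduction from SAT. The membership half is implemented differently, though. The paper does not run a guess-and-check machine on sub-programs; it reduces the $k$-observable check to a $1$-observable check by $k$-fold self composition (with $n=\lfloor 2^q\rfloor+1$ and an assertion that some pair of outputs coincide), takes the weakest precondition of the composed loop-free program --- a polynomial-size formula --- and observes that ${\it ME}[U](M)>q$ iff the negated precondition is satisfiable, i.e., a deterministic polynomial-time reduction to SAT. Your direct nondeterministic algorithm is more elementary and treats ${\it GE}$ on the same footing as ${\it ME}$, whereas the paper's visible text only details ${\it ME}$, whose witness condition is just pairwise distinctness of outputs via Lemma~\ref{lem:melog}; the price is that your soundness step genuinely requires monotonicity of ${\it GE}[U]$ under added traces, which you rightly flag: it is true (for deterministic $M$ with domain size $N$ and output classes of sizes $n_j$, ${\it GE}[U](M)=N/2-\sum_j n_j^2/(2N)$, which cannot decrease when a trace is added) and is the same fact underlying Theorem~\ref{thm:LGEsl}. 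Your SAT reduction is also sound, and your computed values $\log(s+1)$ and $s-\frac{s(s+1)}{2N}$ are exactly right, provided $\bot$ is implemented as a tagged output so it cannot collide with an echoed input. One nit: the correction term $\frac{s(s+1)}{2N}$ is not negligible ``automatically'' for formulas with very few variables; since $q$ is constant, this is repaired by padding with a constant number of extra dummy variables, which your construction already accommodates.
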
 
{\it NP}-hardness is proven by a reduction from ${\text SAT}$, which
is a ${\it NP}$-complete problem.  The proof that $\mathcal{L}_{\it
  ME}$ and $\mathcal{L}_{\it GE}$ for a constant $q$ are in {\it NP}
follows from the fact that $\mathcal{L}_{\it ME}$ and
$\mathcal{L}_{\it GE}$ are $k$-observable hyperproperties for some
$k$.  We give the proof intuition for $\mathcal{L}_{\it ME}$.  Recall
that $k$-observable hyperproperties can be reduced to $1$-observable
hyperproperties via self composition.  Consequently, it is possible to
decide if the information flow of a given program $M$ is greater than
$q$ by checking if the predicate of the {\sf assert} statement is
violated for some inputs in the following program.
\[
\begin{array}{l}
  M'(H_1,H_2,\dots,H_n)\equiv\\
  \quad O_1:=M(H_1);O_2:=M(H_2);\dots;O_n:=M(H_n);\\
  \quad {\sf assert}(\bigvee_{i,j\in\aset{1,\dots,n}}(O_i=O_j\wedge i\not=j))
\end{array}
\]
where $n=\lfloor 2^q\rfloor +1$.  Let $\phi$ be the weakest
precondition of $O_1:=M(H_1);O_2:=M(H_2);\dots;O_n:=M(H_n)$ with
respect to the post condition
$\bigvee_{i,j\in\aset{1,\dots,n}}(O_i=O_j\wedge i\not=j)$.  Then,
${\it ME}[U](M)>q$ iff $\neg \phi$ is satisfiable.  Because a weakest
precondition of a loop-free boolean program is a polynomial size
boolean formula over the boolean variables representing the
inputs\footnote{For loop-free boolean programs, a weakest precondition
  can be constructed in polynomial
  time~\cite{DBLP:conf/popl/FlanaganS01,DBLP:journals/ipl/Leino05}.},
deciding ${\it ME}[U](M)>q$ is reducible to {\text SAT}.

For boolean programs (with loops), $\mathcal{L}_{\it ME}$ and
$\mathcal{L}_{\it GE}$ are {\it PSPACE}-complete, and
$\mathcal{L}_{\it SE}$ is {\it PSPACE}-hard (the tight upper-bound is
open for $\mathcal{L}_{\it SE}$).
\begin{theorem}
\label{thm:lmelgepspace}
  $\mathcal{L}_{\it ME}$ and $\mathcal{L}_{\it GE}$ are {\it
    PSPACE}-complete for boolean programs.
\end{theorem}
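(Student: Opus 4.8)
The plan is to prove the two halves of {\it PSPACE}-completeness separately: membership in {\it PSPACE} and {\it PSPACE}-hardness. For the hardness direction it suffices to treat the fixed bound $q=0$, since a reduction producing only instances with $q=0$ is a fortiori a reduction to the general lower-bounding problems. The starting observation is that, by Lemma~\ref{lem:melog}, ${\it ME}[U](M)>0$ holds exactly when $M$ has at least two distinct output traces, i.e.\ when $M$ is interfering; an analogous calculation for guessing entropy shows the same. Concretely, under $U$ the value ${\it GE}[U](M)$ depends only on the sizes $n_o$ of the output-equivalence classes, namely ${\it GE}[U](M)=\frac{1}{2}(|\mathbb{H}|-\frac{1}{|\mathbb{H}|}\sum_o n_o^2)$, which is positive iff there is more than one class. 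Thus both $(M,0)\in\mathcal{L}_{\it ME}$ and $(M,0)\in\mathcal{L}_{\it GE}$ coincide with the statement that $M$ is interfering.

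For {\it PSPACE}-hardness I would reduce from the reachability (equivalently, termination) problem for boolean programs, which is {\it PSPACE}-complete because a boolean program over $m$ variables can simulate an $m$-space Turing machine. Given a boolean program $B$ with a designated accept condition, I build a program $M$ with a one-bit high input $H$ and output $O$ that first executes $\ttassign{O}{0}$ (independently of $H$), then simulates $B$ on its fixed initial state, and, only upon $B$ reaching the accept condition, executes $\ttassign{O}{H}$ and halts. If $B$ never accepts, then $O$ stays $0$ for every input and $M$ is non-interfering; if $B$ accepts, then the inputs $H=0$ and $H=1$ yield output traces that differ at the step recording $O=H$ (both values non-$\bot$), so $M$ is interfering. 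Hence $M$ is interfering iff $B$ reaches its accept condition, and since the construction is polynomial, both $\mathcal{L}_{\it ME}$ and $\mathcal{L}_{\it GE}$ are {\it PSPACE}-hard already at $q=0$.

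For membership in {\it PSPACE} I would exploit that both quantities are completely determined by how the finite input space $\mathbb{H}$ is partitioned into output-equivalence classes. The core subroutine decides, for two inputs $h,h'$, whether $M(h)$ and $M(h')$ are termination-insensitively equivalent: since a boolean program is a deterministic finite-state system, I simulate the two runs in lockstep, comparing the $O$-values at each step (positions where a run has already terminated trivially satisfy the equivalence), and use finiteness of the joint state space to bound the simulation by $2^{2m}$ steps, so that a periodic repetition certifies either agreement forever or a witnessed disagreement. This runs in polynomial space (two states plus a step counter). I then enumerate every $h\in\mathbb{H}$ with an $O(m)$-bit counter and, reusing space, test whether $h$ is the least representative of its class; for each new class I compute its size $n_o$ by a further pass of equivalence tests. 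Accumulating the number of classes $N_c$ yields ${\it ME}[U](M)=\log N_c$, and accumulating $\sum_o n_o^2$ yields ${\it GE}[U](M)$ via the closed form above; both accumulators are bounded by $|\mathbb{H}|^2\le 2^{2m}$, so all of this, together with the final comparison against the rational bound $q$, stays within polynomial space.

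I expect the main obstacle to be the {\it PSPACE} upper bound rather than the hardness reduction. The delicate points are handling non-terminating (eventually periodic) output traces under the termination-insensitive equivalence purely within polynomial space, which is what forces the cycle-detection bound on the lockstep simulation, and carrying out the final numerical comparison exactly: the bound $q$ is an arbitrary rational, while the computed quantity is $\log N_c$ for {\it ME} and a fraction with denominator $|\mathbb{H}|$ for {\it GE}. Both comparisons reduce to integer comparisons after clearing denominators, but one must verify that the intermediate integers remain $O(m)$-bit so the space budget is respected. The hardness direction, by contrast, is essentially a packaging of the standard {\it PSPACE}-completeness of boolean-program reachability with the $q=0$ characterization of interference established above.
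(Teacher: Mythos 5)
Your decomposition into hardness and membership is the right one, and most of it is sound. The hardness half is correct: at $q=0$ both $\mathcal{L}_{\it ME}$ and $\mathcal{L}_{\it GE}$ coincide with interference (this is just the complement of Theorem~\ref{thm:nonint}, and your direct derivations via Lemma~\ref{lem:melog} and the closed form ${\it GE}[U](M)=\frac{1}{2}\bigl(|\mathbb{H}|-\frac{1}{|\mathbb{H}|}\sum_o n_o^2\bigr)$ are both right), and your reduction from boolean-program reachability --- copy $H$ into $O$ only after the simulated program accepts --- correctly makes interference equivalent to reachability, with the termination-insensitive trace equivalence handled properly. The membership algorithm is also structurally correct: pairwise equivalence of $M(h)$ and $M(h')$ by lockstep simulation with a step counter bounded by the size of the joint state space (note the bound should count program locations as well as stores, so $|M|^2 2^{2m}$ rather than $2^{2m}$; this is harmless), followed by enumeration of class representatives and accumulation of $N_c$ and $\sum_o n_o^2$. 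For $\mathcal{L}_{\it GE}$ the final comparison does clear to polynomial-size integer arithmetic: ${\it GE}[U](M)>a/b$ becomes $b\,(2^{2m}-\sum_o n_o^2) > 2a\cdot 2^m$, and every integer here has polynomially many bits.

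The genuine gap is the final comparison for $\mathcal{L}_{\it ME}$ --- exactly the point you flagged for verification --- and the verification fails as you stated it. Clearing denominators in $\log_2 N_c > a/b$ gives $N_c^{\,b} > 2^a$, where $a$ and $b$ are given in binary. These are not $O(m)$-bit integers: $N_c^{\,b}$ has $\Theta(b\log_2 N_c)$ bits and $2^a$ has $a+1$ bits, both exponential in the input size, so they cannot even be written down in polynomial space. Nor can you retreat to approximating $\log_2 N_c$: when $N_c$ is not a power of two, the elementary separation bound coming from ``distinct integers differ by at least $1$'' only gives $|b\log_2 N_c - a| \ge \log_2\bigl(1+2^{-\max(a,\,bm)}\bigr)$, i.e., an exponentially small gap, so polynomial-precision arithmetic cannot decide the sign. (The reduction to the interesting case $a< bm$ does not help, since $b$ itself may be exponential in its bit-length.)

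This step can be repaired, but it needs a real tool, not bookkeeping: either invoke effective lower bounds on linear forms in two logarithms (Baker--Wüstholz/Matveev), which give $|b\ln N_c - a\ln 2| \ge 2^{-{\rm poly}(m,\log a,\log b)}$ whenever $N_c$ is not a power of two (the power-of-two case reduces to comparing $sb$ with $a$, which is easy), so that polynomially many bits of precision decide the comparison; or observe that deciding the sign of $N_c^{\,b}-2^a$ is an instance of PosSLP, which lies in the counting hierarchy and hence in PSPACE. Without one of these ingredients (or an explicit convention restricting how the rational bound $q$ is presented), your argument establishes the PSPACE upper bound for $\mathcal{L}_{\it GE}$ but not for $\mathcal{L}_{\it ME}$.
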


\begin{theorem}
\label{thm:lsepspace}
$\mathcal{L}_{\it SE}$ is {\it PSPACE}-hard for boolean programs.
\end{theorem}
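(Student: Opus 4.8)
The plan is to prove hardness through the special case $q = 0$, where the lower-bounding problem collapses to testing interference, and then to reduce a canonical PSPACE-complete problem to this test. Everything else is a consequence of results already in the excerpt.

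First I would observe that at the bound $0$ the problem $\mathcal{L}_{\it SE}$ is exactly the set of interfering programs. By Theorem~\ref{thm:nonint}(1), $M$ is non-interfering iff $(M,0) \in \mathcal{U}_{\it SE}$, i.e. iff ${\it SE}[U](M) \le 0$; since ${\it SE}[U](M) \ge 0$ always holds (conditioning does not increase entropy), this is equivalent to ${\it SE}[U](M) = 0$. Hence $(M,0) \in \mathcal{L}_{\it SE}$ iff ${\it SE}[U](M) > 0$ iff $M$ is interfering. By Lemma~\ref{lem:melog} the identical characterization holds for min-entropy, so $(M,0) \in \mathcal{L}_{\it SE} \Leftrightarrow (M,0) \in \mathcal{L}_{\it ME} \Leftrightarrow |\aset{o \mid \exists h.\, M(h) = o}| \ge 2$. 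Consequently, if the reduction behind Theorem~\ref{thm:lmelgepspace} already produces instances with bound $0$, it transfers verbatim to $\mathcal{L}_{\it SE}$; to keep the argument self-contained, however, I would give a direct reduction to the interference test.

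The reduction is from TQBF (true quantified boolean formulas), the standard PSPACE-complete problem. Given $\Phi = Q_1 x_1 \cdots Q_n x_n.\, \phi$, I would build a loop-ful boolean program $E_\Phi$ with no high input that evaluates $\Phi$ in polynomial space: maintain the current partial assignment together with an explicit control stack of depth $n$ (each frame holding a branch bit and an accumulated truth value), and drive the depth-first traversal of the quantifier tree with a single while loop. Such $E_\Phi$ has size polynomial in $|\Phi|$ and terminates with a bit $b$ equal to the truth value of $\Phi$. I then set $M_\Phi(H) \equiv E_\Phi;\, \ttassign{O}{b \wedge H}$, which outputs $H$ when $b$ holds and $0$ otherwise. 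If $\Phi$ is true then $M_\Phi(H) = H$, so distinct inputs yield distinct outputs and $M_\Phi$ is interfering; if $\Phi$ is false then $M_\Phi$ outputs $0$ on every input and is non-interfering. Thus $\Phi \in \text{TQBF}$ iff $(M_\Phi, 0) \in \mathcal{L}_{\it SE}$, and since $\Phi \mapsto (M_\Phi, 0)$ is computable in polynomial time, $\mathcal{L}_{\it SE}$ is PSPACE-hard.

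The main obstacle is the construction of $E_\Phi$: one must show that recursive QBF evaluation can be rewritten as a polynomial-size program using only the while/if/assignment fragment of Figure~\ref{fig:syntax}, keeping the state (stack plus assignment) within $O(n)$ bits, and verify its correctness. This is a routine space-bounded simulation, but it is where the real work lies; everything downstream (nonnegativity of ${\it SE}$, the interference characterization at bound $0$, and polynomial-time computability of the reduction) is immediate from the results already established.
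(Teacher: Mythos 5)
Your proof is correct and follows essentially the same route as the paper: both arguments collapse the problem to the interference test at bound zero, using Theorem~\ref{thm:nonint} together with the nonnegativity of ${\it SE}[U]$ to identify $(M,0)\in\mathcal{L}_{\it SE}$ with ``$M$ is interfering,'' and then make a PSPACE-hard computation decide whether $H$ is copied to $O$. The only divergence is the source of hardness: the paper appeals to the known PSPACE-hardness of reachability/non-interference for boolean programs with loops, whereas you re-derive that fact from TQBF by explicitly constructing the polynomial-size evaluator $E_\Phi$ --- a routine space-bounded simulation, so your version is merely more self-contained, not different in substance.
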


\section{Safety Hyperproperties}

\label{sec:safety}

Clarkson and Schneider~\cite{DBLP:journals/jcs/ClarksonS10} have
proposed safety hyperproperties, a subclass of hyperproperties, as a
generalization of safety properties.  Intuitively, a safety
hyperproperty is a hyperproperty that can be refuted by observing a
finite set of finite traces.
\begin{definition}[Safety Hyperproperties~\cite{DBLP:journals/jcs/ClarksonS10}] We say that a
hyperproperty $P$ is a safety hyperproperty if for any set of traces
$S\not\in P$, there exists a set of traces $T\in {\it Obs}$ such that
$T\le S$, and $\forall S'\in{\it Prop}. T\le S'\Rightarrow S'\not\in
P$.
\end{definition}

We classify some upper-bounding problems as safety hyperproperties.
\begin{theorem}
\label{thm:umeugesp}
  $U_{\it ME}$ and $U_{\it GE}$ are safety hyperproperties.
\end{theorem}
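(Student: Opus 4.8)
The plan is to exploit a duality between safety and observable hyperproperties: a hyperproperty $P$ is a safety hyperproperty precisely when its complement ${\it Prop}\setminus P$ is an observable hyperproperty. Unfolding the two definitions confirms this --- the safety condition ``for every $S\notin P$ there is a finite $T\le S$ with $\forall S'.\,T\le S'\Rightarrow S'\notin P$'' is literally the observable condition applied to ${\it Prop}\setminus P$. Since $\mathcal{U}_{\it ME}$ and $\mathcal{L}_{\it ME}$ partition the space of (program, rational) pairs --- one asserts ${\it ME}[U](M)\le q$ and the other ${\it ME}[U](M)>q$ --- they are mutual complements, and likewise for $\mathcal{U}_{\it GE}$ and $\mathcal{L}_{\it GE}$. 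Theorems~\ref{thm:LMEsl} and~\ref{thm:LGEsl} already establish that $\mathcal{L}_{\it ME}$ and $\mathcal{L}_{\it GE}$ are observable, so the duality delivers the theorem immediately.

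To make the argument self-contained (and to exhibit the refuting evidence concretely), I would also give the direct proof for $\mathcal{U}_{\it ME}$. Suppose $(M,q)\notin\mathcal{U}_{\it ME}$, i.e.\ ${\it ME}[U](M)>q$. By Lemma~\ref{lem:melog} this means $M$ has strictly more than $2^q$ distinct output traces, hence at least $n=\lfloor 2^q\rfloor+1$ pairwise inequivalent outputs $o_1,\dots,o_n$. Because inequivalence under the termination-insensitive model means two outputs concretely disagree (both non-$\bot$ and unequal) at some finite position, there is a finite length $\ell$ at which the length-$\ell$ prefixes of $o_1,\dots,o_n$ are already pairwise distinguishable. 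Taking $T$ to be the corresponding $n$ finite trace prefixes drawn from $M$ gives a finite evidence with $T\le M$. For any $S'\in{\it Prop}$ with $T\le S'$, every prefix in $T$ extends to a trace of $S'$, so $S'$ still exhibits $n$ pairwise distinct outputs; Lemma~\ref{lem:melog} then yields ${\it ME}[U](S')\ge\log(\lfloor 2^q\rfloor+1)>q$, so $S'\notin\mathcal{U}_{\it ME}$. This is exactly the safety condition.

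For $\mathcal{U}_{\it GE}$ the structure is the same, but the evidence set is governed by the guessing-entropy bound of Theorem~\ref{thm:lgel} rather than the clean logarithmic count: if ${\it GE}[U](M)>q$, one extracts finitely many finite prefixes that already force the partition of the inputs to be refined enough that the guessing entropy of every extension exceeds $q$. The monotonicity used here --- that adding observation can only increase ${\it GE}[U]$ --- is the same fact underlying Theorem~\ref{thm:LGEsl}.

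I expect the main obstacle to be the passage from infinite output traces to finite distinguishing prefixes under the termination-insensitive equivalence, i.e.\ verifying that a single finite $\ell$ simultaneously separates all $\binom{n}{2}$ pairs and that no infinite extension can re-merge two outputs already separated at a concrete position. For $\mathcal{U}_{\it GE}$ the additional care is in reading off the finite evidence size from $q$ so that the refutation bound is tight. Invoking the safety/observable duality sidesteps both difficulties by reducing everything to the already-proven Theorems~\ref{thm:LMEsl} and~\ref{thm:LGEsl}.
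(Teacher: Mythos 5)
Your proposal is correct, and it actually contains two routes, one of which is the paper's. The paper's own proof is the direct one you give as your ``self-contained'' argument: from $(M,q)\notin\mathcal{U}_{\it ME}$, use Lemma~\ref{lem:melog} to get more than $2^q$ pairwise inequivalent output traces, truncate at a finite position where all the disagreements have concretely occurred (the termination-insensitive equivalence guarantees inequivalent outputs differ at a position where both are non-$\bot$, so no extension can re-merge them), and conclude every extension of that finite evidence still has min-entropy flow exceeding $q$; the {\it GE} case rests on the same monotonicity fact (more observation can only increase ${\it GE}[U]$) that the paper already uses for Theorem~\ref{thm:LGEsl}. Your primary route --- the complementation duality --- is a genuinely different and cleaner argument that the paper never states explicitly: with the paper's definitions, $P$ is a safety hyperproperty exactly when ${\it Prop}\setminus P$ is an observable hyperproperty, and since $\mathcal{U}_{\it ME}$/$\mathcal{L}_{\it ME}$ and $\mathcal{U}_{\it GE}$/$\mathcal{L}_{\it GE}$ partition the space of pairs, Theorems~\ref{thm:LMEsl} and~\ref{thm:LGEsl} immediately give the result. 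What the duality buys is uniformity (one argument covers both measures, and no prefix bookkeeping); what the direct proof buys is the concrete refuting evidence, which is what the constant-bound refinements (Theorems~\ref{thm:umes} and~\ref{thm:uges}) and the self-composition-based complexity results actually exploit. The only care needed for the duality is that the complement be taken over the same universe the definitions quantify over --- deterministic systems (paired with rationals) on which ${\it ME}[U]$ and ${\it GE}[U]$ are totally defined, so that the two bounding problems are exact complements; under the paper's conventions this holds. Your {\it GE} sketch is the one hand-wavy spot (for the general, non-constant-$q$ safety claim you do not need, and should not expect, an evidence size bounded in terms of $q$ alone; prefixes of all of $M$'s finitely many traces suffice), but since the duality argument covers {\it GE} rigorously, this does not leave a gap.
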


Next, we review the definition of $k$-safety
hyperproperties~\cite{DBLP:journals/jcs/ClarksonS10}, which refines
the notion of safety hyperproperties.  Informally, a $k$-safety
hyperproperty is a hyperproperty which can be refuted by observing $k$
number of finite traces.
\begin{definition}[K-Safety
  Hyperproperties~\cite{DBLP:journals/jcs/ClarksonS10}]
  We say that a hyperproperty $P$ is a $k$-safety property if for any
  set of traces $S\not\in P$, there exists a set of traces $T\in {\it
    Obs}$ such that $T\le S$, $|T|\le k$, and $\forall S'\in{\it
    Prop}. T\le S'\Rightarrow S'\not\in P$.
\end{definition}
Note that $1$-safety hyperproperty is just the standard safety
property, that is, a property that can be refuted by observing a
finite execution trace.  The notion of $k$-safety hyperproperties
first came into limelight when it was noticed that non-interference
is a $2$-safety hyperproperty, but not a $1$-safety
hyperproperty~\cite{terauchi:sas05}.

A $k$-safety hyperproperty can be reduced to a $1$-safety
hyperproperty by self composition~\cite{barthe:csfw04,darvas:spc05}.
\begin{theorem}[\cite{DBLP:journals/jcs/ClarksonS10}]
  $k$-safety hyperproperty can be reduced to $1$-safety hyperproperty
  by self composition.
\end{theorem}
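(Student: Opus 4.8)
The plan is to reduce refutation of a $k$-safety hyperproperty $P$ on a system $S$ to refutation of an ordinary ($1$-safety) hyperproperty on the self-composed system $S^k$, exploiting that a counterexample to $k$-safety consists of at most $k$ finite traces and that $k$-product parallel self composition bundles any such $k$ traces into a single trace over $k$-tuples. This mirrors the argument already used for the $k$-observable reduction, with ``bad'' finite observations in place of ``good'' ones.

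First I would extract from the definition of $k$-safety a fixed set of bad observations
\[
\mathit{Bad} = \aset{T \in {\it Obs} \mid |T|\le k,\ \forall S'\in{\it Prop}.\ T\le S' \Rightarrow S'\notin P}.
\]
By the definition of $k$-safety, $S\notin P$ iff some $T\in\mathit{Bad}$ satisfies $T\le S$: the forward direction is exactly the definition, and the converse is immediate by instantiating $S'=S$ in the membership condition of $\mathit{Bad}$. This turns the quantifier-heavy safety condition into the existence of a single finite witness set of size at most $k$.

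Second, I would normalise each witness to exactly $k$ finite traces of a common length. Given $T=\aset{t_1,\dots,t_j}\in\mathit{Bad}$ with $j\le k$ and $T\le S$, I pad the tuple to length $k$ by repeating $t_1$ and extend every trace to the common length $M=\max_\ell|t_\ell|$ using prefixes of the infinite extensions witnessing $T\le S$. Lengthening a witness only strengthens it: if $\tilde T$ lengthens the traces of $T$, then $\tilde T\le S'$ implies $T\le S'$, so $\tilde T$ stays in $\mathit{Bad}$, while the chosen extensions keep $\tilde T\le S$. Hence $S\notin P$ iff there is an equal-length, size-$k$ witness $T\in\mathit{Bad}$ with $T\le S$. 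Third, for such a $T=\aset{t_1,\dots,t_k}$ I define its bundle $\widehat T$ to be the single finite product trace with $\widehat T[i]=(t_1[i],\dots,t_k[i])$, and prove the key correspondence $T\le S$ iff $\aset{\widehat T}\le S^k$, using that a trace of $S^k$ is precisely a coordinatewise tuple of $k$ traces of $S$, so $\widehat T$ extends inside $S^k$ exactly when each $t_\ell$ extends inside $S$. I then set $\mathit{Bad}'=\aset{\widehat T \mid T\in\mathit{Bad},\ |T|=k,\ T\text{ equal-length}}$ and define $P' = \aset{R \mid \neg\exists \hat t\in\mathit{Bad}'.\ \aset{\hat t}\le R}$. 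By construction $P'$ is refuted by a single finite trace, hence is $1$-safety, and combining the three steps yields $S\in P \iff S^k\in P'$, the required reduction.

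The main obstacle is the second step, the alignment of evidence traces. The clean bundle needs all $k$ component prefixes to share a length, whereas a $k$-safety counterexample may use fewer than $k$ traces of differing lengths. Proving that one may always pad and lengthen a bad observation without leaving $\mathit{Bad}$, and that the lengthening can be realised inside $S$ so the $\le S$ witness is preserved, is where the monotonicity of $\le$ under prefix-extension (together with the termination-insensitive trace model) must be applied with care; once this normalisation lemma is established, the bundling correspondence and the $1$-safety of $P'$ are routine.
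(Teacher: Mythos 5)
Your proof is correct and takes essentially the same route as the cited source (Clarkson and Schneider), this being a theorem the paper itself states by citation without reproving: collect the bad observations of the $k$-safety hyperproperty, normalize them to equal-length $k$-tuples, zip them into single bad traces of the $k$-fold self-composition, and let $P'$ be the $1$-safety hyperproperty they refute, giving $S\in P \iff S^k\in P'$. The length-alignment issue you single out is indeed the one technical care point of the standard argument, and your handling of it (prefix-monotonicity of $\le$ together with realizing the extensions inside $S$) is sound.
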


We have shown in our previous work that $\mathcal{U}_{\it ME}$ and
$\mathcal{U}_{\it GE}$ are $k$-safety hyperproperties when the bound
$q$ is fixed to a constant.
\begin{theorem}[\cite{DBLP:conf/esorics/YasuokaT10}]
\label{thm:umes}
Let $q$ be a constant.  ${\mathcal U}_{\it ME}$ is a $\lfloor
2^q\rfloor +1$-safety property.
\end{theorem}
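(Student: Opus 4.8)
The plan is to reduce everything to counting distinct output traces via Lemma~\ref{lem:melog}, which states ${\it ME}[U](M) = \log|\aset{o \mid \exists h.\, M(h) = o}|$. Writing $k = \lfloor 2^q\rfloor + 1$, I would first reformulate membership: since $\log$ is monotone, $(M,q) \notin \mathcal{U}_{\it ME}$ means ${\it ME}[U](M) > q$, i.e.\ $M$ has strictly more than $2^q$ pairwise non-equivalent output traces, and hence at least $k$ of them, as the count is an integer exceeding $2^q$ and $\lfloor 2^q\rfloor + 1$ is the least such integer. Conversely, $\log k = \log(\lfloor 2^q\rfloor + 1) > \log 2^q = q$, so any trace set exhibiting $k$ pairwise non-equivalent outputs already falsifies membership. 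This equivalence drives the whole argument, and it also explains why the bound is exactly $k$: for $q = 0$ we recover $k = 2$, matching the fact that non-interference (the $q=0$ instance, cf.\ Theorem~\ref{thm:nonint}) is $2$-safety.

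Next I would construct the refuting evidence $T \in {\it Obs}$. Given $M \notin \mathcal{U}_{\it ME}$, choose traces $\vect{\sigma}^1, \dots, \vect{\sigma}^k \in M$ whose output traces $o^1, \dots, o^k$ are pairwise non-equivalent. Under the termination-insensitive equivalence, $o^i$ and $o^j$ being non-equivalent means there is a finite position $p_{ij} \ge 1$ with $o^i_{p_{ij}} \ne \bot$, $o^j_{p_{ij}} \ne \bot$, and $o^i_{p_{ij}} \ne o^j_{p_{ij}}$. Let $N = \max_{i \ne j} p_{ij}$, which is finite because there are only $\binom{k}{2}$ pairs, and let $t_i$ be the length-$N$ prefix $\sigma^i_0; \dots; \sigma^i_N$ of $\vect{\sigma}^i$. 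Then $T = \aset{t_1, \dots, t_k}$ is a finite set of finite traces with $|T| \le k$.

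It then remains to verify the two conditions in the definition of $k$-safety. For $T \le M$: each $t_i$ is a prefix of $\vect{\sigma}^i \in M$, so taking $t_i'$ to be the corresponding suffix gives $t_i \circ t_i' = \vect{\sigma}^i \in M$. For robustness, take any $S' \in {\it Prop}$ with $T \le S'$; then each $t_i$ extends to some trace in $S'$ whose output agrees with $o^i$ on positions $1, \dots, N$. Since every distinguishing index $p_{ij}$ lies at or below $N$, these $k$ extensions still have pairwise non-equivalent outputs, so $S'$ has at least $k$ pairwise non-equivalent output traces. Applying Lemma~\ref{lem:melog} to the (deterministic) trace set $S'$ gives ${\it ME}[U](S') \ge \log k > q$, hence $S' \notin \mathcal{U}_{\it ME}$. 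This establishes that $\mathcal{U}_{\it ME}$ is a $\lfloor 2^q\rfloor + 1$-safety hyperproperty.

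The main obstacle is verifying that truncation to a finite prefix robustly preserves the distinctness of outputs under the termination-insensitive observation model; in particular I must confirm that the finite witnesses $p_{ij}$ survive arbitrary extension, which hinges on the fact that the distinguishing positions carry non-$\bot$ values on both sides and are fixed by the common prefix $t_i$, so no extension can collapse two chosen outputs into one equivalence class. The only other point needing care is that $N$ is genuinely finite, which holds because $q$, and therefore $k$, is a constant, so only finitely many pairs of representatives are involved; this is precisely why constancy of the bound is essential for obtaining a fixed $k$ rather than a family of evidence sizes growing with the input.
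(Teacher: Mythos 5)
Your proposal is correct and takes essentially the same approach as the paper and the prior work it cites: reduce everything to counting pairwise non-equivalent output traces via Lemma~\ref{lem:melog}, observe that $(M,q)\notin\mathcal{U}_{\it ME}$ is equivalent to the existence of $\lfloor 2^q\rfloor+1$ traces with pairwise distinct outputs, and use those traces, truncated just past the finitely many distinguishing (non-$\bot$) positions, as the refuting evidence $T$ whose presence in any extension $S'$ forces ${\it ME}[U](S')>q$. This is exactly the counting-plus-finite-witness argument the paper sketches for the dual result (Theorem~\ref{thm:lmel}), so there is nothing to flag.
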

\begin{theorem}[\cite{DBLP:conf/esorics/YasuokaT10}]
\label{thm:uges}
Let $q$ be a constant.  ${\mathcal U}_{\it GE}$ is a $\lfloor
\frac{(\lfloor q\rfloor +1)^2}{\lfloor q\rfloor +1 -q}\rfloor
+1$-safety property.
\end{theorem}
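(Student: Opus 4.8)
The plan is to establish $\mathcal{U}_{\it GE}$ as a $k$-safety hyperproperty in direct parallel with the min-entropy case of Theorem~\ref{thm:umes}, reducing the claim to two ingredients: a \emph{monotonicity lemma} for ${\it GE}[U]$ and a \emph{bounded witness extraction} whose size is controlled by the stated constant. Throughout I will use the closed form of guessing-entropy leakage under the uniform distribution, namely ${\it GE}[U](M) = \tfrac{1}{2}\bigl(N - \tfrac{1}{N}\sum_{o} n_o^2\bigr)$, where $N = |\mathbb{H}|$ and $n_o = |\{h \mid M(h) = o\}|$ is the size of the output class of $o$ (this is the analogue, for {\it GE}, of the identity in Lemma~\ref{lem:melog}). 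To verify the $k$-safety condition I must show: whenever $(M,q)\notin\mathcal{U}_{\it GE}$, i.e.\ ${\it GE}[U](M) > q$, there is an observation $T\in{\it Obs}$ with $|T|\le k$ and $T\le M$ such that every $S'\in{\it Prop}$ with $T\le S'$ satisfies ${\it GE}[U](S') > q$.

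First I would prove the monotonicity lemma: if $M'$ contains at least as much observation as $M$ — formally, $M'$ is obtained from $M$ by adding inputs and/or splitting output classes — then ${\it GE}[U](M) \le {\it GE}[U](M')$. It suffices to check the two elementary moves. For adding one input to a class of current size $n_c$ (taking $n_c=0$ for a fresh class), a short calculation gives the increment $\tfrac{1}{2} + \tfrac{Q - 2Nn_c - N}{2N(N+1)}$ with $Q=\sum_o n_o^2$, and non-negativity reduces to $N^2 + Q \ge 2Nn_c$, which holds because $Q \ge n_c^2$ and $N^2 + n_c^2 \ge 2Nn_c$ by the arithmetic–geometric mean inequality. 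For splitting a class of size $n_o$ into parts $a$ and $n_o - a$, the sum $\sum_o n_o^2$ strictly decreases (by $2a(n_o-a)$), so ${\it GE}[U]$ increases. Composing these moves yields the lemma.

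Given monotonicity, the extraction step is as follows. From a program $M$ with ${\it GE}[U](M) > q$ I select a finite sub-collection $T_0\subseteq M$ of at most $k$ inputs whose induced program already satisfies ${\it GE}[U](T_0) > q$, then take output prefixes long enough to exhibit all pairwise distinctions present in $T_0$; this produces $T\in{\it Obs}$ with $|T|\le k$ and $T\le M$. Any $S'$ with $T\le S'$ then contains the $k$ witness inputs with an induced partition at least as fine as that of $T_0$, so by monotonicity ${\it GE}[U](S') \ge {\it GE}[U](T_0) > q$, giving $S'\notin\mathcal{U}_{\it GE}$ as required. The constant arises from the worst case for this extraction. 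For a two-class program with a small class of size $s$ and a second class whose size tends to infinity, one computes ${\it GE}[U](M)\to s$, while the guessing entropy of a sub-collection taking $j_2\le s$ and $j_1$ elements from the two classes equals $\tfrac{j_1 j_2}{j_1+j_2}$; exceeding $q$ forces $j_1 > \tfrac{q s}{s-q}$ and hence a witness of size exceeding $\tfrac{s^2}{s-q}$. Minimizing the adversary's flexibility drives $s$ to the least integer above $q$, namely $s = \lfloor q\rfloor + 1$, yielding exactly $\lfloor \tfrac{(\lfloor q\rfloor+1)^2}{\lfloor q\rfloor+1-q}\rfloor + 1$; this same family of programs serves as the tightness example showing no smaller $k$ works.

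The main obstacle is the \emph{upper} direction of the constant: proving that $k$ inputs always suffice for \emph{every} $M$ with ${\it GE}[U](M)>q$, not merely for the two-class family. I would argue that the two-class configuration with the smaller class of size $\lfloor q\rfloor + 1$ is extremal for the extraction cost, by an exchange/convexity argument showing that merging the classes of $M$ down to two, and shrinking the distinguished class toward size $\lfloor q\rfloor+1$, can only increase the number of inputs a witness needs while keeping ${\it GE}[U](M) > q$. Making this reduction rigorous for an arbitrary class-size distribution — in particular controlling how the harmonic-mean-type contribution $\tfrac{j_1 j_2}{j_1+j_2}$ generalizes when the witness is spread over more than two classes — is the technical heart of the proof; the monotonicity lemma and the prefix-truncation bookkeeping are routine by comparison.
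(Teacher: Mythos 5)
Your framework is the right one, and it matches the structure any proof of this theorem must have (the paper itself does not reprove the result --- it imports it from the earlier ESORICS work --- but that work's argument has the same shape): establish monotonicity of ${\it GE}[U]$ under adding inputs and refining output classes, extract from any violating $M$ a sub-collection of at most $k$ inputs whose induced program already has ${\it GE}[U]>q$, and use prefix truncation plus monotonicity to conclude that every extension of that evidence still violates the bound. Your supporting computations are also correct: the closed form ${\it GE}[U](M)=\tfrac12\bigl(N-\tfrac1N\sum_o n_o^2\bigr)$, both increment calculations in the monotonicity lemma (including the reduction to $N^2+Q\ge 2Nn_c$), and the two-class analysis showing that a class of size $\lfloor q\rfloor+1$ against an arbitrarily large class forces every witness to exceed $\tfrac{(\lfloor q\rfloor+1)^2}{\lfloor q\rfloor+1-q}$ traces, which is exactly where the constant comes from.

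However, the content of the theorem is precisely the step you do not prove: that \emph{every} $M$ with ${\it GE}[U](M)>q$ admits a witness of at most $k$ inputs. The tightness family only shows that no smaller $k$ works, which is not part of the statement. Worse, the route you sketch for the missing step is unsound as written: merging classes \emph{coarsens} the output partition, hence by your own monotonicity lemma (read in reverse) it can only decrease ${\it GE}[U]$, so ``merging the classes of $M$ down to two \ldots while keeping ${\it GE}[U](M)>q$'' is impossible in general. Concretely, for $q>1$ take $M$ with $m=\lfloor 2q\rfloor+2$ singleton output classes: ${\it GE}[U](M)=\tfrac{m-1}{2}>q$, yet any merge into two classes of sizes $a$ and $m-a$ gives ${\it GE}[U]=\tfrac{a(m-a)}{m}\le\tfrac{m}{4}\le\tfrac{q+1}{2}<q$. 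So the two-class configuration cannot be shown extremal by a GE-preserving merge; the sufficiency argument must handle arbitrary class-size vectors directly --- for instance, splitting on the number of classes (if $M$ has at least $\lfloor 2q\rfloor+2$ classes, one trace per class suffices and $\lfloor 2q\rfloor+2\le k$; otherwise one must choose per-class quotas $j_i\le n_i$ with $\sum_i j_i\le k$ and prove $\sum_{i<j} j_i j_j/\sum_i j_i>q$ is always achievable). That combinatorial core is the theorem, and it is exactly what is missing from the proposal.
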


The only hyperproperty that is both a safety hyperproperty and a
liveness hyperproperty is $\mathcal{P}(\Psi_{\tt inf})$, that is, the
set of all traces~\cite{DBLP:journals/jcs/ClarksonS10}.  Consequently,
neither ${\mathcal U}_{\it ME}$ nor ${\mathcal U}_{\it GE}$ is a
liveness hyperproperty.

We have also shown in the previous work that the upper-bounding
problem for Shannon-entropy based quantitative information flow is not
a $k$-safety hyperproperty, even when $q$ is a constant.
\begin{theorem}[\cite{DBLP:conf/esorics/YasuokaT10}]
  Let $q$ be a constant.  ${\mathcal U}_{\it SE}$ is not a $k$-safety
  property for any $k>0$.
\end{theorem}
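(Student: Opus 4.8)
The plan is to refute $k$-safety directly from its definition. By that definition, to show $\mathcal{U}_{\it SE}$ is \emph{not} a $k$-safety hyperproperty it suffices, for each fixed $k>0$, to exhibit a single program $M \notin \mathcal{U}_{\it SE}$ (i.e.\ with ${\it SE}[U](M) > q$) such that \emph{every} candidate refuting evidence of bounded size fails: for every $T \in {\it Obs}$ with $T \le M$ and $|T| \le k$, there is an $S' \in {\it Prop}$ with $T \le S'$ and ${\it SE}[U](S') \le q$, so $S' \in \mathcal{U}_{\it SE}$. The existence of such an $S'$ for every small $T$ contradicts the $k$-safety requirement that some size-$\le k$ evidence commit all its extensions to staying outside $\mathcal{U}_{\it SE}$.

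The key computational fact I would establish first is that for a deterministic program under the uniform distribution, ${\it SE}[U](M) = \mathcal{H}[U](O)$, the Shannon entropy of the output distribution; this follows in one line from the symmetry $\mathcal{H}[U](H) - \mathcal{H}[U](H \mid O) = \mathcal{H}[U](O) - \mathcal{H}[U](O \mid H)$ together with $\mathcal{H}[U](O \mid H) = 0$ for deterministic $M$. This recasts the dilution idea behind Theorem~\ref{thm:USElp} quantitatively: concentrating the output distribution onto a single value drives the leakage to zero.

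For the construction, fix $k$ and take $M$ to be injective on an input domain of size $N > 2^q$, so that every input yields a distinct output and ${\it SE}[U](M) = \log N > q$, placing $M$ outside $\mathcal{U}_{\it SE}$. Now let $T = \{t_1,\dots,t_j\}$ be any evidence with $j \le k$ and $T \le M$; each $t_i$ is a finite prefix of the trace of $M$ on some input $h_i$. I build $S'$ by completing each $t_i$ to a full trace (e.g.\ its trace in $M$) and then adjoining $m$ fresh inputs all mapped to one common fresh output $o^{*}$, which is legitimate since input domains are unbounded. Then $T \le S'$ holds by construction, and the output distribution of $S'$ carries mass at least $\tfrac{m}{m+k}$ on $o^{*}$ and splits the rest among at most $k$ classes, so that in the worst (all-distinct) configuration
\[
{\it SE}[U](S') \;=\; \mathcal{H}[U](O) \;\le\; \frac{k}{m+k}\log(m+k) \;+\; \frac{m}{m+k}\log\!\Big(1+\frac{k}{m}\Big),
\]
and both summands tend to $0$ as $m \to \infty$. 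Hence for all sufficiently large $m$ we obtain ${\it SE}[U](S') \le q$, i.e.\ $S' \in \mathcal{U}_{\it SE}$.

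I would close by noting that the bound above depends only on $k$ and $m$, not on the particular prefixes chosen, so the argument handles the universally quantified $T$ uniformly; since $k$ was arbitrary, $\mathcal{U}_{\it SE}$ is not a $k$-safety hyperproperty for any $k>0$. The main obstacle is precisely this uniformity over all size-$\le k$ evidences: one must verify that however the $j \le k$ prefixes are completed, their contribution to $\mathcal{H}[U](O)$ stays bounded (coincidences among them, or with $o^{*}$, only lower it), so that the growing mass on $o^{*}$ always dominates and pushes the output entropy below $q$.
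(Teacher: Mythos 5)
Your proposal is correct and follows essentially the same route as the paper's (cited) proof: exhibit a program with flow exceeding $q$, and then defeat every candidate size-$\le k$ evidence by the dilution construction --- adjoining unboundedly many fresh inputs mapped to one common output, which (using ${\it SE}[U]=\mathcal{H}[U](O)$ for deterministic programs) drives the Shannon-entropy flow below $q$. This is precisely the ``lower the flow by adding traces with the same output trace'' idea the paper states for Theorem~\ref{thm:USElp}, here made quantitative and uniform in the choice of evidence.
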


\subsection{Computational Complexities}
We prove computational complexities of upper-bounding problems by
utilizing their hyperproperty classifications.  As in
Section~\ref{sec:complower}, we focus on boolean programs.

First, we show that when $q$ is a constant, ${\it U}_{\it ME}$ and
${\it U}_{\it GE}$ are ${\it coNP}$-complete.
\begin{theorem}
\label{thm:umeugeconp}
  Let $q$ be a constant.  Then, $\mathcal{U}_{\it ME}$ and
  $\mathcal{U}_{\it GE}$ are {\it coNP}-complete for loop-free boolean
  programs.
\end{theorem}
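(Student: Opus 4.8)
The plan is to prove the two directions separately, both leaning on results already established. For membership in {\it coNP}, I would invoke the $k$-safety characterizations: by Theorem~\ref{thm:umes}, for a constant bound $q$, $\mathcal{U}_{\it ME}$ is a $(\lfloor 2^q\rfloor+1)$-safety hyperproperty, and by Theorem~\ref{thm:uges}, $\mathcal{U}_{\it GE}$ is a $k$-safety hyperproperty for the corresponding constant $k$. Since $q$ is constant, $k$ is constant, so the $k$-product self composition reduces each problem to a $1$-safety property over a loop-free boolean program whose size is polynomial in $|M|$ (it runs $k$ copies of $M$ and adds one assertion). Concretely, I would reuse the weakest precondition $\phi$ constructed in the sketch of Theorem~\ref{thm:lmelgenp}: there $(M,q)\in\mathcal{L}_{\it ME}$ iff $\neg\phi$ is satisfiable, so by complementation $(M,q)\in\mathcal{U}_{\it ME}$ iff ${\it ME}[U](M)\le q$ iff $\neg\phi$ is unsatisfiable iff $\phi$ is valid. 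Because $\phi$ is a polynomial-size boolean formula, deciding its validity is the tautology problem, which is in {\it coNP}; the {\it GE} case is identical with the constant $k$ of Theorem~\ref{thm:uges} in place of $\lfloor 2^q\rfloor+1$.

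For {\it coNP}-hardness, I would observe that $\mathcal{U}_{\it ME}$ and $\mathcal{L}_{\it ME}$ are exact complements: $(M,q)\in\mathcal{U}_{\it ME}$ iff ${\it ME}[U](M)\le q$ iff $(M,q)\notin\mathcal{L}_{\it ME}$, and likewise for {\it GE}. Theorem~\ref{thm:lmelgenp} supplies a polynomial-time reduction from {\text SAT} to $\mathcal{L}_{\it ME}$ (resp. $\mathcal{L}_{\it GE}$); by complementation the very same reduction maps unsatisfiable formulas precisely to instances of $\mathcal{U}_{\it ME}$ (resp. $\mathcal{U}_{\it GE}$), so it is a reduction from the {\it coNP}-complete problem {\text UNSAT}. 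Hence both upper-bounding problems are {\it coNP}-hard, and together with the membership argument, {\it coNP}-complete.

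The main obstacle is not any single deep step but getting the constants and the inequality boundary exactly right. I must confirm that the self-composed program has size polynomial in $|M|$, which depends essentially on $k$ being fixed; this is precisely why the constant-bound hypothesis is needed, since otherwise $k=\lfloor 2^q\rfloor+1$ is exponential in the bit-length of $q$. I also need to verify the boundary direction: because the number of distinct output traces is an integer, Lemma~\ref{lem:melog} gives ${\it ME}[U](M)\le q$ iff $M$ has at most $\lfloor 2^q\rfloor$ distinct outputs, so the assertion detecting a $(\lfloor 2^q\rfloor+1)$-th distinct output is the correct witness, and its non-reachability (i.e.\ validity of $\phi$) is exactly the certificate for $\mathcal{U}_{\it ME}$. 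With these checks in place the argument goes through uniformly for both {\it ME} and {\it GE}.
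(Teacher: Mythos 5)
Your coNP-membership argument for $\mathcal{U}_{\it ME}$ is essentially the paper's own proof: both use the $k$-safety characterization for constant $q$, the $(\lfloor 2^q\rfloor+1)$-fold self composition, and the observation that the weakest precondition of the composed loop-free program with respect to the collision predicate $\bigvee_{i\ne j}(O_i=O_j)$ is a polynomial-size boolean formula whose validity holds exactly when ${\it ME}[U](M)\le q$. Your hardness argument, however, takes a genuinely different route: the paper derives coNP-hardness from the coNP-hardness of non-interference (citing prior work), which by Theorem~\ref{thm:nonint} is the $q=0$ instance of the upper-bounding problem, whereas you complement the SAT-reduction of Theorem~\ref{thm:lmelgenp}, using that $\mathcal{U}_{\it ME}$ and $\mathcal{L}_{\it ME}$ partition the instance space, so the same map is a reduction from UNSAT to $\mathcal{U}_{\it ME}$. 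This is legitimate (Theorem~\ref{thm:lmelgenp} precedes the present theorem, so there is no circularity), and it has the advantage of covering every constant $q$ uniformly, whereas the non-interference route as stated in the main text directly addresses only the bound $0$.

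The one genuine gap is your treatment of {\it GE}. You claim the {\it GE} case is ``identical with the constant $k$ of Theorem~\ref{thm:uges} in place of $\lfloor 2^q\rfloor+1$,'' i.e., the same collision assertion with more copies. That fails: the collision assertion is correct for {\it ME} only because of Lemma~\ref{lem:melog}, which says ${\it ME}[U](M)$ is determined by the number of distinct output traces. ${\it GE}[U](M)$ is not so determined; it depends on the sizes of the output preimages. Concretely, let $M$ be the loop-free boolean program over $n$ input bits that outputs the first bit of $H$. It has only two distinct outputs, so for any $k\ge 3$ the composed postcondition $\bigvee_{i\ne j}(O_i=O_j)$ holds for all inputs by pigeonhole, hence your formula $\phi$ is valid; yet ${\it GE}[U](M)=2^{n}/4$, which exceeds any constant $q$ once $n$ is large. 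So validity of $\phi$ does not certify ${\it GE}[U](M)\le q$. For {\it GE} you need a different postcondition, one detecting the witness structure that underlies the $k$-safety bound of Theorem~\ref{thm:uges} (it must constrain how often outputs repeat, i.e., preimage sizes, not merely whether some pair collides); with such an assertion in place, the same self-composition and weakest-precondition reasoning does yield coNP membership. Your complementation argument for {\it GE}-hardness is unaffected, since it only uses Theorem~\ref{thm:lmelgenp}.
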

{\it coNP}-hardness follows from the fact that non-interference is
{\it coNP}-hard~\cite{DBLP:conf/esorics/YasuokaT10}.  The {\it coNP}
part of the proof is similar to the {\it NP} part of
Theorem~\ref{thm:lmelgenp}, and uses the fact that $\mathcal{U}_{\it
  ME}$ is $k$-safety for a fixed $q$ and uses self composition.  By
self composition, the upper-bounding problem can be reduced to a
reachability problem (i.e., an assertion failure is unreachable for
any input).  To decide if ${\it ME}[U](M)\le q$, we construct the
following self-composed program $M'$ from the given program $M$.
\[
\begin{array}{l}
  M'(H_1,H_2,\dots,H_n)\equiv\\
  \quad O_1:=M(H_1);O_2:=M(H_2);\dots;O_n:=M(H_n);\\
  \quad {\sf assert}(\bigvee_{i,j\in\aset{1,\dots,n}}(O_i=O_j\wedge i\not=j))
\end{array}
\]
where $n=\lfloor 2^q\rfloor +1$.  Then, the weakest precondition of
$O_1:=M(H_1);O_2:=M(H_2);\dots;O_n:=M(H_n)$ with respect to the post
condition $\bigvee_{i,j\in\aset{1,\dots,n}}(O_i=O_j\wedge i\not=j)$ is
valid iff ${\it ME}[U](M)\le q$.  Because a weakest precondition of a
loop-free boolean program is a polynomial size boolean formula, and
the problem of deciding a given boolean formula is valid is a {\it
  coNP}-complete problem, $\mathcal{U}_{\it ME}$ is in {\it coNP}.

Like the lower-bounding problems $\mathcal{U}_{\it ME}$ and
$\mathcal{U}_{\it GE}$ for boolean programs (with loops) are {\it
  PSPACE}-complete, and $\mathcal{U}_{\it SE}$ is {\it PSPACE}-hard.
\begin{theorem}
\label{thm:umeugepspace}
  $\mathcal{U}_{\it ME}$ and $\mathcal{U}_{\it GE}$ are {\it
    PSPACE}-complete for boolean programs.
\end{theorem}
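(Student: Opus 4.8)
The plan is to establish both membership in {\it PSPACE} and {\it PSPACE}-hardness, and the most economical route obtains both at once by complementation against the lower-bounding problems of Theorem~\ref{thm:lmelgepspace}. First I would observe that, for every pair $(M,q)$ with $M$ a boolean program, exactly one of ${\it ME}[U](M)\le q$ and ${\it ME}[U](M)>q$ holds, so $\mathcal{U}_{\it ME}$ and $\mathcal{L}_{\it ME}$ partition the space of inputs; that is, $\mathcal{U}_{\it ME}$ is precisely the complement of $\mathcal{L}_{\it ME}$, and likewise $\mathcal{U}_{\it GE}$ is the complement of $\mathcal{L}_{\it GE}$.

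Next I would invoke the fact that {\it PSPACE} is closed under complementation (immediate for deterministic space, or via Savitch's theorem for the nondeterministic formulation). Since $\mathcal{L}_{\it ME}$ and $\mathcal{L}_{\it GE}$ are {\it PSPACE}-complete by Theorem~\ref{thm:lmelgepspace}, their complements $\mathcal{U}_{\it ME}$ and $\mathcal{U}_{\it GE}$ lie in {\it PSPACE} by this closure; and they are {\it PSPACE}-hard because, given any $L\in{\it PSPACE}$, its complement $\bar L$ is also in {\it PSPACE} and hence reduces to $\mathcal{L}_{\it ME}$, whereupon the very same many-one reduction maps $L$ into $\mathcal{U}_{\it ME}$. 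This dispatches both directions uniformly for {\it ME} and {\it GE}.

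Should a self-contained argument be preferred over the complementation shortcut, I would prove membership and hardness separately. For membership I would use Lemma~\ref{lem:melog}, which gives ${\it ME}[U](M)=\log|\{o\mid \exists h.\,M(h)=o\}|$, so that ${\it ME}[U](M)\le q$ iff $M$ has at most $2^q$ distinct output traces; a polynomial-space machine can enumerate inputs, track the output equivalence classes induced by the termination-insensitive observation model (outputs of a boolean program are eventually periodic, so each can be handled in polynomial space), and compare the running count against the threshold $2^q$, aborting once it is exceeded. For hardness the $q=0$ instance already suffices: by Theorem~\ref{thm:nonint}, $(M,0)\in\mathcal{U}_{\it ME}$ iff $M$ is non-interfering, and deciding non-interference of boolean programs is {\it PSPACE}-hard.

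The main obstacle is not the complementation route, where the real content lives entirely in Theorem~\ref{thm:lmelgepspace}, but the self-contained membership argument for {\it GE}. Guessing entropy lacks the clean logarithmic characterization that Lemma~\ref{lem:melog} supplies for {\it ME}, so deciding ${\it GE}[U](M)\le q$ directly requires recovering enough of the partition of the $2^{|\vect H|}$ inputs into output-preimage classes to evaluate $\mathcal{G}[U](H\mid O)$ and compare it with the bound; the delicate part is arranging this computation to run in polynomial rather than exponential space while still accounting exactly for the class sizes. For this reason I would present the complementation argument as the primary proof and treat the direct counting argument only as a remark.
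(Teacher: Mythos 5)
Your primary argument is correct: over well-formed instances $(M,q)$, $\mathcal{U}_{\it ME}$ and $\mathcal{L}_{\it ME}$ are exact complements (likewise for {\it GE}), and since {\it PSPACE} is closed under complementation, Theorem~\ref{thm:lmelgepspace} immediately yields both membership and hardness; your observation that a polynomial-time many-one reduction from $\overline{L}$ to $\mathcal{L}_{\it ME}$ is simultaneously a reduction from $L$ to $\mathcal{U}_{\it ME}$ is exactly right. The paper itself defers the proof of this theorem to its extended report, but judging from the analogous loop-free case (Theorem~\ref{thm:umeugeconp}, where {\it coNP}-hardness is derived from hardness of non-interference), the intended route is presumably your ``self-contained'' variant: membership by explicit counting in polynomial space, and hardness via Theorem~\ref{thm:nonint} at $q=0$ together with {\it PSPACE}-hardness of non-interference for boolean programs (a standard reduction from boolean-program reachability). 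Your shortcut buys economy---all content is delegated to Theorem~\ref{thm:lmelgepspace}---at the price of making the result parasitic on a theorem whose proof is likewise deferred; the direct argument is self-contained and is the one that would survive if the two bounding problems were not literal complements of each other.

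One substantive correction: your assessment of which direct membership argument is delicate is backwards. For {\it GE} under the uniform distribution no logarithms arise at all: if the output partitions the $N=|\mathbb{H}|$ inputs into preimage classes of sizes $n_1,\dots,n_c$, then ${\it GE}[U](M)=N/2-\frac{1}{2N}\sum_j n_j^2$, so for $q=a/b$ the test ${\it GE}[U](M)\le q$ is equivalent to $b\,(N^2-\sum_j n_j^2)\le 2Na$, an exact integer comparison on polynomially many bits, and the $n_j$ can be computed one class at a time in polynomial space. It is {\it ME} that carries the numerical subtlety: by Lemma~\ref{lem:melog} one must decide $\log_2 c\le a/b$, i.e.\ $c^b\le 2^a$, where $a$ and $b$ are given in binary, so the integers involved can have exponentially many bits; making that comparison rigorous in polynomial space needs an extra idea (e.g.\ a lower bound on $|b\log_2 c-a|$ when it is nonzero, plus the trivial dismissal of the case $q\ge\log_2|\mathbb{H}|$). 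Your complementation route conveniently sidesteps this, but the same issue then resurfaces inside the membership half of Theorem~\ref{thm:lmelgepspace}, so it cannot really be avoided---only relocated.
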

\begin{theorem}
\label{thm:usepspace}
$\mathcal{U}_{\it SE}$ is {\it PSPACE}-hard for boolean programs.
\end{theorem}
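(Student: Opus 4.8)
The plan is to exploit Theorem~\ref{thm:nonint}, which tells us that $(M,0)\in\mathcal{U}_{\it SE}$ holds precisely when $M$ is non-interfering. Thus it suffices to show that deciding non-interference of boolean programs is {\it PSPACE}-hard, since this is exactly the special case of $\mathcal{U}_{\it SE}$ with the bound $q$ fixed to $0$. I would obtain the hardness by a polynomial-time reduction from the \emph{non-reachability} problem for deterministic boolean programs, i.e., the problem of deciding whether a given boolean program $N$, run from a fixed initial store, never reaches a designated target store. Reachability for boolean programs with loops is {\it PSPACE}-complete under the implicit-state measure, since such programs can simulate polynomial-space Turing machines; as {\it PSPACE} is closed under complement, non-reachability is {\it PSPACE}-complete as well.

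Given such an $N$, I would construct a program $M$ with a single high-security input bit $H$ as follows. $M$ simulates $N$ step by step while emitting a constant low output (say $0$) at every observable step, so that as long as $N$ has not reached the target store the output trace is $0;0;0;\dots$ independently of $H$. At the first step at which $N$ reaches the target store, $M$ instead emits the value of $H$, and thereafter resumes emitting the constant. Because $N$ does not read $H$, its computation --- and in particular whether and when it reaches the target --- is identical for every value of $H$, and the whole construction is clearly polynomial in the size of $N$.

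Correctness then follows by two cases. If $N$ never reaches the target, then for every $h$ the trace of $M$ is the constant $0;0;0;\dots$, so $M(h)=M(h')$ for all $h,h'$ and $M$ is non-interfering, giving $(M,0)\in\mathcal{U}_{\it SE}$. If $N$ does reach the target, say at observable step $t$ (the same $t$ for all inputs), then $M(0)$ and $M(1)$ differ at position $t$, where one emits $0$ and the other emits $1$; since this is a non-$\bot$ position, the two traces are inequivalent under the termination-insensitive equivalence, so $M$ is interfering and $(M,0)\notin\mathcal{U}_{\it SE}$. Hence $N$ is a non-reachability ``yes'' instance iff $(M,0)\in\mathcal{U}_{\it SE}$, establishing {\it PSPACE}-hardness. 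This mirrors the reduction behind Theorem~\ref{thm:lsepspace}, where $\mathcal{L}_{\it SE}$ at $q=0$ is exactly the interference problem; here we simply target its complement.

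The step I expect to require the most care is the handling of the termination-insensitive observation model. I must ensure that in the reaching case the traces genuinely differ at a non-$\bot$ position --- emitting $H$ at a live (non-terminated) step accomplishes this --- and that in the non-reaching case, including when $N$ diverges, both traces remain exactly $0;0;0;\dots$ so that they are equivalent and $M$ is non-interfering. A secondary point to verify is that reachability for deterministic boolean programs is {\it PSPACE}-complete under the implicit-state complexity measure used here, as opposed to the explicit-state measure of~\cite{DBLP:conf/csfw/CernyCH11}.
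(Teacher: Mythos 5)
Your proposal is correct and follows essentially the same route as the paper: reduce to the $q=0$ case via Theorem~\ref{thm:nonint}, so that PSPACE-hardness of $\mathcal{U}_{\it SE}$ follows from PSPACE-hardness of non-interference for boolean programs, which is in turn obtained by a reduction from (the complement of) the PSPACE-complete reachability problem for boolean programs, with the high bit leaked only upon reaching the target. The details you flag (implicit-state PSPACE-completeness of reachability and the placement of the differing output at a non-$\bot$ position) are handled exactly as you suggest.
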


\section{Discussion}

\subsection{Bounding Domains}

\label{sec:bounddomain}
The notion of hyperproperty is defined over all programs regardless of
their size. (For example, non-interference is a $2$-safety property
for all programs and reachability is a safety property for all
programs.) But, it is easy to show that the lower bounding problems
would become ``$k$-observable'' hyperproperties if we constrained and
bounded the input domains because then the size of the semantics
(i.e., the number of traces) of such programs would be bounded by
$|\mathbb{H}|$ (and upper bounding problems would become
``$k$-safety''
hyperproperties~\cite{DBLP:conf/esorics/YasuokaT10}). In this case,
the problems are trivially $|\mathbb{H}|$-observable hyperproperties.
However, these bounds are high for all but very small domains, and are
unlikely to lead to a practical verification method.

\subsection{Observable Hyperproperties and Observable Properties}
As remarked in~\cite{DBLP:journals/jcs/ClarksonS10}, observable
hyperproperties generalize the notion of observable
properties~\cite{DBLP:journals/apal/Abramsky91}.  It can be shown that
there exists a non-empty observable property that is not a liveness
property (e.g., the set of all traces that starts with $\sigma$).  In
contrast, Theorem~\ref{thm:nonempty} states that every non-empty
observable hyperproperty is also a liveness hyperproperty.
Intuitively, this follows because the hyperproperty extension relation
$\le$ allows the right-hand side to contain traces that does not
appear in the left-hand side.  Therefore, for any $T\in {\it Obs}$,
there exists $T'\in{\it Prop}$ that contains $T$ and an evidence of
the observable hyperproperty.

\subsection{Maximum of QIF over Distribution}

\label{sec:maxqif}

Researchers have studied the maximum of QIF over the distribution.
For example, {\em channel
  capacity}~\cite{mccamant:pldi2008,malacaria08,NMS2009} is the
maximum of the Shannon-entropy based quantitative information flow
over the distribution (i.e., $\max_\mu{\it SE}[\mu]$).
Smith~\cite{smith09} showed that for any program without low-security
inputs, the channel capacity is equal to the min-entropy-based
quantitative information flow, that is, $\max_\mu{\it SE}[\mu] = {\it
  ME}[U]$.  Therefore, we obtain the same hyperproperty
classifications and complexity results for channel capacity as ${\it
  ME}[U]$.

{\it Min-entropy channel capacity} and {\it guessing-entropy channel
  capacity} are respectively the maximums of min-entropy based and
guessing-entropy based QIF over distributions (i.e., $\max_\mu{\it
  ME}[\mu]$ and $\max_\mu{\it GE}[\mu]$).  It has been shown that
$\max_\mu{\it ME}[\mu] = {\it
  ME}[U]$~\cite{DBLP:journals/entcs/BraunCP09,DBLP:conf/csfw/KopfS10}
and $\max_\mu{\it GE}[\mu] = {\it GE}[U]$~\cite{yasuoka:jocssubmit},
that is, they attain their maximums when the distributions are
uniform.  Therefore, they have the same hyperproperty classifications
and complexities as ${\it ME}[U]$ and ${\it GE}[U]$, which we have
already analyzed in this paper.

\section{Related Work}
{\v C}ern\'y et al.~\cite{DBLP:conf/csfw/CernyCH11} have investigated
the computational complexity of Shannon-entropy based QIF.  Formally,
they have defined a Shannon-entropy based QIF for interactive boolean
programs, and showed that the explicit-state computational complexity
of their lower-bounding problem is {\it PSPACE}-complete.  In
contrast, this paper's complexity results are ``implicit'' complexity
results of bounding problems of boolean programs (i.e., complexity
relative to the syntactic size of the input) some of which are
obtained by utilizing their hyperproperties classifications.

Clarkson and Schneider~\cite{DBLP:journals/jcs/ClarksonS10} have
classified quantitative information flow problems via hyperproperties.
Namely, they have shown that the problem of deciding if the channel
capacity of a given program is $q$, is a liveness hyperproperty.  And,
they have shown that an upper-bounding problem for the {\em
  belief}-based QIF~\cite{clarkson:csf2005} is a safety hyperproperty.
(It is possible to refine their result to show that their problem for
deterministic programs is actually equivalent to non-interference, and
therefore, is a $2$-safety hyperproperty~\cite{yasuoka:jocssubmit}.)

\section{Conclusion}

We have related the upper and lower bounding problems of quantitative
information flow, for various information theoretic definitions
proposed in literature, to Clarkson and Schneider's hyperproperties.
Hyperproperties generalize the classical trace properties, and are
thought to be more suitable for classifying information flow
properties as they are relations over sets of program traces.  Our
results confirm this by giving a fine-grained classification and
showing that it gives insights into the complexity of the QIF bounding
problems.  One of the contributions is a new class of hyperproperties:
{\em k-observable} hyperproperty.  We have shown that $k$-observable
hyperproperties are amenable to verification via self composition.

\bibliographystyle{eptcs}
\bibliography{klive}

\end{document}